\begin{document}
\theoremstyle{plain}
\newtheorem{theorem}{Theorem}[section]
\newtheorem{lemma}[theorem]{Lemma}
\newtheorem{corollary}[theorem]{Corollary}
\newtheorem{proposition}[theorem]{Proposition}
\newtheorem{example}[theorem]{Example}
\newtheorem{examples}[theorem]{Examples}
\newtheorem{question}[theorem]{Question}
\newtheorem{conjecture}[theorem]{Conjecture}
\theoremstyle{definition}
\newtheorem{notations}[theorem]{Notations}
\newtheorem{notation}[theorem]{Notation}
\newtheorem{remark}[theorem]{Remark}
\newtheorem{remarks}[theorem]{Remarks}
\newtheorem{definition}[theorem]{Definition}
\newtheorem{claim}[theorem]{Claim}
\newtheorem{problem}[theorem]{Problem}
\newtheorem{assumption}[theorem]{Assumption}
\numberwithin{equation}{section}
\newtheorem{examplerm}[theorem]{Example}
\newtheorem{propositionrm}[theorem]{Proposition}

\newcommand{\binomial}[2]{\left(\begin{array}{c}#1\\#2\end{array}\right)}
\newcommand{\zar}{{\rm zar}}
\newcommand{\an}{{\rm an}}
\newcommand{\red}{{\rm red}}
\newcommand{\codim}{{\rm codim}}
\newcommand{\rank}{{\rm rank}}
\newcommand{\Pic}{{\rm Pic}}
\newcommand{\Div}{{\rm Div}}
\newcommand{\Hom}{{\rm Hom}}
\newcommand{\im}{{\rm im}}
\newcommand{\Spec}{{\rm Spec}}
\newcommand{\sing}{{\rm sing}}
\newcommand{\reg}{{\rm reg}}
\newcommand{\Char}{{\rm char}}
\newcommand{\Tr}{{\rm Tr}}
\newcommand{\tr}{{\rm tr}}
\newcommand{\supp}{{\rm supp}}
\newcommand{\Gal}{{\rm Gal}}
\newcommand{\Min}{{\rm Min \ }}
\newcommand{\Max}{{\rm Max \ }}
\newcommand{\Span}{{\rm Span  }}

\newcommand{\Frob}{{\rm Frob}}
\newcommand{\lcm}{{\rm lcm}}

\newcommand{\soplus}[1]{\stackrel{#1}{\oplus}}
\newcommand{\dlog}{{\rm dlog}\,}    
\newcommand{\limdir}[1]{{\displaystyle{\mathop{\rm
lim}_{\buildrel\longrightarrow\over{#1}}}}\,}
\newcommand{\liminv}[1]{{\displaystyle{\mathop{\rm
lim}_{\buildrel\longleftarrow\over{#1}}}}\,}
\newcommand{\boxtensor}{{\Box\kern-9.03pt\raise1.42pt\hbox{$\times$}}}
\newcommand{\sext}{\mbox{${\mathcal E}xt\,$}}
\newcommand{\shom}{\mbox{${\mathcal H}om\,$}}
\newcommand{\coker}{{\rm coker}\,}
\renewcommand{\iff}{\mbox{ $\Longleftrightarrow$ }}
\newcommand{\onto}{\mbox{$\,\>>>\hspace{-.5cm}\to\hspace{.15cm}$}}

\newenvironment{pf}{\noindent\textbf{Proof.}\quad}{\hfill{$\Box$}}

\newcommand{\sA}{{\mathcal A}}
\newcommand{\sB}{{\mathcal B}}
\newcommand{\sC}{{\mathcal C}}
\newcommand{\sD}{{\mathcal D}}
\newcommand{\sE}{{\mathcal E}}
\newcommand{\sF}{{\mathcal F}}
\newcommand{\sG}{{\mathcal G}}
\newcommand{\sH}{{\mathcal H}}
\newcommand{\sI}{{\mathcal I}}
\newcommand{\sJ}{{\mathcal J}}
\newcommand{\sK}{{\mathcal K}}
\newcommand{\sL}{{\mathcal L}}
\newcommand{\sM}{{\mathcal M}}
\newcommand{\sN}{{\mathcal N}}
\newcommand{\sO}{{\mathcal O}}
\newcommand{\sP}{{\mathcal P}}
\newcommand{\sQ}{{\mathcal Q}}
\newcommand{\sR}{{\mathcal R}}
\newcommand{\sS}{{\mathcal S}}
\newcommand{\sT}{{\mathcal T}}
\newcommand{\sU}{{\mathcal U}}
\newcommand{\sV}{{\mathcal V}}
\newcommand{\sW}{{\mathcal W}}
\newcommand{\sX}{{\mathcal X}}
\newcommand{\sY}{{\mathcal Y}}
\newcommand{\sZ}{{\mathcal Z}}

\newcommand{\A}{{\mathbb A}}
\newcommand{\B}{{\mathbb B}}
\newcommand{\C}{{\mathbb C}}
\newcommand{\D}{{\mathbb D}}
\newcommand{\E}{{\mathbb E}}
\newcommand{\F}{{\mathbb F}}
\newcommand{\G}{{\mathbb G}}
\newcommand{\HH}{{\mathbb H}}
\newcommand{\I}{{\mathbb I}}
\newcommand{\J}{{\mathbb J}}
\newcommand{\M}{{\mathbb M}}
\newcommand{\N}{{\mathbb N}}
\renewcommand{\P}{{\mathbb P}}
\newcommand{\Q}{{\mathbb Q}}
\newcommand{\T}{{\mathbb T}}
\newcommand{\U}{{\mathbb U}}
\newcommand{\V}{{\mathbb V}}
\newcommand{\W}{{\mathbb W}}
\newcommand{\X}{{\mathbb X}}
\newcommand{\Y}{{\mathbb Y}}
\newcommand{\Z}{{\mathbb Z}}

\newcommand{\be}{\begin{eqnarray}}
\newcommand{\ee}{\end{eqnarray}}
\newcommand{\nn}{{\nonumber}}
\newcommand{\dd}{\displaystyle}
\newcommand{\ra}{\rightarrow}
\newcommand{\bigmid}[1][12]{\mathrel{\left| \rule{0pt}{#1pt}\right.}}
\newcommand{\cl}{${\rm \ell}$}
\newcommand{\clp}{${\rm \ell^\prime}$}

\newcommand{\id}[1]{\langle #1\rangle}
\newcommand{\R}{GR(p^a,m)}
\newcommand{\Rf}{\frac{\R[x]}{\id{x^{2p^s}-1}}}
\newcommand{\fR}{\R}
\newcommand{\nck}[2]{{#1 \choose #2}}
\newcommand{\mySkip}{\bigskip}

\newcommand{\fm}[1][n]{ \ensuremath{f(x_1,\dots, x_{#1})}}
\newcommand{\gm}[1][n]{ \ensuremath{g(x_1,\dots, x_{#1})}}
\newcommand{\FqR}[1][n]{ {\mathbb F}_q[x_1, \dots , x_{#1}] }


\newcommand{\Edgar}[1]{{\color{red}{Edgar's Comment: #1}} }
\newcommand{\Steve}[1]{{\color{blue}{Steve's Comment: #1}} }
\newcommand{\Ferruh}[1]{{\color{ForestGreen}{Ferruh's Comment: #1}} }

\title[Monomial-like codes]{Monomial-like codes}

\author[Edgar Mart\'{i}nez-Moro, Hakan \"{O}zadam, Ferruh \"{O}zbudak, Steve Szabo]{Edgar Mart\'{i}nez-Moro$^1$, Hakan \"{O}zadam$^2$, Ferruh \"{O}zbudak$^2$, Steve Szabo$^3$}
\maketitle

\begin{center}
$^1$
Departamento de Matem\'{a}tica Aplicada\\
Universidad de Valladolid, Campus Duques de Soria, 42003 Soria, Castilla, Spain\\
e-mail: edgar@maf.uva.es\\ \ \\
$^2$\ Department of Mathematics and Institute of Applied Mathematics \\
Middle East Technical University, \.{I}n\"on\"u Bulvar{\i}, 06531, Ankara, Turkey \\
e-mail: \{ozhakan,ozbudak\}@metu.edu.tr\\ \ \\
$^3$\ Department of Mathematics\\
Ohio University, Athens, Ohio, 45701, USA\\
szabo@math.ohiou.edu
\end{center}


\abstract
   As a generalization of cyclic codes of length $p^s$\ over $\F_{p^a}$,
   we study $n$-dimensional cyclic codes of length $p^{s_1} \times \cdots \times p^{s_n}$\
   over $\F_{p^a}$\ generated by a single  ``monomial''. 
   Namely, we study multi-variable cyclic codes of the form
   $\id{(x_1 - 1)^{i_1} \cdots (x_n - 1)^{i_n}} \subset \frac{\FqR}{\id{ x_1^{p^{s_1}}-1, \dots , x_n^{p^{s_n}}-1  }}$.
   We call such codes \textit{monomial-like codes}.
   We show that these codes arise from the product of certain single variable codes 
   and we determine their minimum Hamming distance.
   We determine the dual of monomial-like codes yielding a parity check matrix.
   We  also present an alternative way of constructing a parity check matrix using the Hasse derivative.
   We study  the weight hierarchy of certain monomial like codes.
   We simplify an expression that gives us the  weight hierarchy of these codes.
\endabstract

\vspace{0.5cm}

\emph{Keywords: Monomial ideal, cyclic code, repeated-root cyclic code, Hamming distance, generalized Hamming weight}

\section{Introduction}
\label{Section.Introduction}

Cyclic codes are said  to be repeated-root when the codeword length and the 
characteristic of the alphabet are not coprime. 
In some cases repeated-root cyclic codes have the following interesting properties.
Massey et. al. have shown in \cite{MASSEY_1973}
that cyclic codes of length $p$\ over a finite field of characteristic $p$\ are optimal.
There also exist infinite families of repeated-root cyclic codes in even characteristic 
according to the results of \cite{vanlint_1991}.
It was pointed out in \cite{MASSEY_1973} that some repeated-root cyclic codes can be decoded
using a very simple circuitry.
Among the studies on repeated-root cyclic codes are \cite{castagnoli_1991}, \cite{dinh_2008}, \cite{lopez_2009}, 
\cite{Martinez-Moro_2007}, 
 \cite{MASSEY_1973}, \cite{ozadam_2009_2} and \cite{vanlint_1991}.

Contrary to the simple-root case, there are repeated root cyclic codes of the form $\id{f^i(x)}$\
where $i>1$. Specifically, all cyclic codes of length $p^s$\ over a finite field of characteristic
$p$\ are generated by a single ``monomial'' of the form $(x-1)^i$, where $0 \le i \le p^s$\
(c.f. \cite{dinh_2008} and \cite{ozadam_2009_2}).
In this paper, as a generalization of these codes to many variables,
we study cyclic codes of the form
\be\nn
  \id{(x_1 - 1)^{i_1} \cdots (x_n -1)^{i_n}}\subset \frac{\F_{p^a}[x_1, \dots , x_n]}{\id{x_1^{p^{s_1}} -1, \dots , x_n^{p^{s_n}}-1 }}.
\ee
In other words, we study $n$-dimensional cyclic codes of length $p^{s_1} \times \cdots \times p^{s_n}$,
generated by a single ``monomial'', over a finite field of characteristic $p$.
We call these codes ``monomial-like'' codes.
After exploring some properties of the ambient space $\frac{\F_{p^a}[x_1, \dots , x_n]}{\id{x_1^{p^{s_1}}, \dots , x_n^{p^{s_n}}}}$,
we show that monomial-like codes arise from product codes.
More precisely, we show that multi-variable monomial-like codes are actually the product of
one-variable monomial-like codes.
This enables us to express the minimum Hamming distance of monomial-like codes
as a product of the minimum Hamming distance of cyclic codes of length $p^s$\
which was computed in  \cite{dinh_2008} and \cite{ozadam_2009_2}.
In addition to this, we determine the dual of monomial-like codes which also yields a parity check matrix 
for monomial like codes.

The weight hierarchy of  linear codes was introduced in \cite{Helleseth_Gen_Ham_Weight_First} and
\cite{Wei_Generaized_HW_1991}.
For some application motives, the weight hierarchy is considered as an important property of a liner code.
We simplify an expression, which was conjectured in \cite{WEI_YANG_Conjecture} and proved in
\cite{schaathun_2000}, for certain monomial like codes.
We obtain a simplified expression that gives the weight hierarchy of the monomial like codes
which are products of cyclic codes of length $p$\ over $\F_{p^a}$.   

In \cite{castagnoli_1991}, the authors show how to construct a parity check matrix for
repeated-root cyclic codes in one variable.
This construction is based on the Hasse derivative and the repeated-root factor test.
When the codeword length is a power of $p$, their construction applies to
monomial-like codes in one variable.
We generalize the repeated-root factor test and the construction of the parity check matrix
to monomial-like codes in many variables.

This paper is organized as follows.
First we introduce some notation, give some definitions and prove some structural
properties of the ambient space of monomial-like codes in Section \ref{Section.Preliminaries}.
In Section \ref{Section.Monomial.Like.Codes}, we define monomial-like codes. 
We show that these codes arise from product codes and we determine their Hamming distance.
We describe the dual of monomial-like codes which yields a parity check matrix for these codes.
In Section \ref{Section.Generalized.Ham.W.Prod.Codes}, we study the generalized Hamming weight of some product codes
and simplify an expression which gives the weight hierarchy of certain monomial-like codes.
In Section \ref{Section.Hasse.Der}, we explain how to construct a parity check matrix for
monomial-like codes using the Hasse derivative.

\section{The Ambient Space}
\label{Section.Preliminaries}
Throughout the paper, we consider the  finite ring
\be
  \label{Definition.Ambient.Space}
  \sR = \frac{\FqR}{\id{x_1^{p^{s_1}} - 1, x_2^{p^{s_2}} - 1,\dots, x_n^{p^{s_n}} - 1} }
\ee
as the ambient space of the codes to be studied unless stated otherwise.
We define
\be
    L = \{ (i_1,i_2, \dots, i_n): \quad 0 \le i_j < p^{s_j},\quad i_j\in \Z \quad \mbox{for all}\quad 1 \le j \le n \}.\nn
\ee

The elements of $\sR$\ can be identified uniquely with the polynomials of the form
\be\label{Preliminaries.Polynomial.f.as.Codeword}
   \fm = \sum_{(i_1,i_2,\dots, i_n) \in L}f_{(i_1,i_2,\dots, i_n)}x_1^{i_1}x_2^{i_2}\cdots x_n^{i_n},\nn
\ee
so throughout the paper, we identify the equivalence class 
$\fm + \id{x_1^{p^{s_1}} - 1, x_2^{p^{s_2}} - 1,\dots, x_n^{p^{s_n}} - 1}$
with the polynomial $\fm$.
The $n$-dimensional cyclic codes over $\F_q$\ of length $p^{s_1} \times p^{s_2} \times \cdots \times p^{s_n}$\
are exactly the ideals of $\sR$\ where we identify each codeword
$\displaystyle (f_{(i_1,i_2,\dots, i_n)})_{(i_1,i_2,\dots, i_n)\in L}$\ 
with the polynomial
$\fm$\ via a fixed monomial ordering.
The support of $\fm$\ is the set
\be
    \sG (f) = \{ (i_1,i_2,\dots, i_n) \in L : \quad f_{(i_1,i_2,\dots, i_n)} \neq 0 \}, \nn
\ee
and the Hamming weight of $\fm$\ is defined as
\be
    w_H( \fm ) = |\sG (f)|,\nn
\ee
i.e., the number of nonzero coefficients of $\fm$.
The minimum Hamming distance of a code $C$\ is defined as 
$$d_H(C) = \min \{w_H( \fm ): \quad \fm \in C\setminus \{ 0 \}  \}.$$

\begin{lemma}
\label{Preliminaries.Lemma.R.is.local}
  $\sR$\ is a local ring with the maximal ideal $M = \id{x_1-1, x_2-1, \dots, x_n - 1}$.
\end{lemma}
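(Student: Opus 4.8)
The plan is to reduce the statement to a transparently nilpotent situation by exploiting the characteristic $p$ hypothesis. Since $\F_q$ has characteristic $p$, the Frobenius (``freshman's dream'') identity gives $x_j^{p^{s_j}} - 1 = (x_j - 1)^{p^{s_j}}$ for each $j$, so the defining ideal of $\sR$ equals $\id{(x_1-1)^{p^{s_1}}, \dots, (x_n-1)^{p^{s_n}}}$. First I would make the change of variables $y_j = x_j - 1$, under which $\sR$ becomes isomorphic to $\F_q[y_1,\dots,y_n]/\id{y_1^{p^{s_1}},\dots,y_n^{p^{s_n}}}$ and the candidate ideal $M$ becomes $\id{y_1,\dots,y_n}$. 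In these coordinates the content of the lemma is that each generator $y_j$ is nilpotent, which is exactly what the rewritten relations say.

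Next I would observe that $M$ is a nilpotent ideal. Writing $N = \sum_{j}(p^{s_j}-1) + 1$, any monomial $y_1^{e_1}\cdots y_n^{e_n}$ with $\sum_j e_j \ge N$ must have some exponent $e_j \ge p^{s_j}$ by pigeonhole, hence vanishes; therefore $M^{N} = 0$, and in particular every element of $M$ is nilpotent. I would then identify $M$ with the set of non-units of $\sR$. Every element of $\sR$ is uniquely of the form $c + m$ with $c \in \F_q$ its constant term and $m \in M$. If $c \ne 0$, then $c + m = c\,(1 + c^{-1}m)$, and since $c^{-1}m$ is nilpotent the factor $1 + c^{-1}m$ is a unit with inverse the finite sum $\sum_{k \ge 0}(-c^{-1}m)^{k}$, so $c+m$ is a unit. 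Conversely, no element of $M$ can be a unit, since a nilpotent element is never invertible in a nonzero ring. Thus the non-units of $\sR$ are precisely the elements of $M$.

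Finally, since the non-units form the ideal $M$, the standard criterion for a commutative ring yields that $\sR$ is local with unique maximal ideal $M$; one may also note directly that $\sR/M \cong \F_q$ is a field, so $M$ is maximal. The whole argument is essentially routine, and I do not anticipate a genuine obstacle. The only step deserving emphasis is the opening rewriting of the generators via the characteristic $p$ identity: this is what forces the $y_j$ to be honestly nilpotent, and hence $\sR$ to be Artinian local, rather than producing the semisimple (separable) situation one would encounter when the length and characteristic are coprime.
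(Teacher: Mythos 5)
Your proof is correct and follows essentially the same route as the paper's: both arguments rest on the characteristic-$p$ fact that each $x_j-1$ is nilpotent (the paper makes the substitution $x_\ell=(x_\ell-1)+1$ directly, which is the same as your change of variables $y_j=x_j-1$, stated as an isomorphism in Remark \ref{Remark.Isomorhpism.Basic.Change.of.Variable}), and then conclude that an element with nonzero ``constant term'' is a unit because it is a unit plus a nilpotent, so the non-units are exactly $M$. If anything, your writeup is slightly more complete than the paper's, since you spell out the Frobenius identity, the geometric-series inverse, and the converse direction that elements of $M$ cannot be units.
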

\begin{proof}
 Let $\fm \in \sR$.
 Using the substitution $x_{\ell} = (x_{\ell}-1) + 1$, for all $1 \le \ell \le r$, we can express $\fm$\ as
 \be
  \fm = \sum c_{i_1,i_2,\dots,i_n}x_1^{i_1}x_2^{i_2}\cdots x_n^{i_n} =
    \sum d_{i_1,i_2,\dots,i_n}(x_1-1)^{i_1}(x_2-1)^{i_2}\cdots (x_n-1)^{i_n}.\nn
 \ee
 If $d_{0,0,\dots,0} \neq 0$, then $\fm = f_0(x_1,x_2,\dots,x_n) + d_{0,0,\dots,0}$\
 for some $f_0(x_1,x_2,\dots,x_n) \in \id{x_1 - 1, x_2 - 1,\dots, x_n - 1}$.
 Since $x_{\ell}-1$\ are nilpotent, for all $1 \le \ell \le n$, $f_0(x_1,x_2,\dots,x_n)$\
 is also a nilpotent element and therefore,
 being a sum of a nilpotent element and a unit, $f(x_1,x_2,\dots,x_n)$\ is a unit.
 In other words, $\sR \setminus \{ \id{x_1 - 1, x_2 - 1,\dots, x_n - 1} \}$\
 consists of exactly the units of $\sR$.
 This implies that $\sR$\ is a local ring with the maximal ideal
 $\id{x_1 - 1, x_2 - 1,\dots, x_n - 1}$.
\end{proof}

\begin{remark}
  \label{Preliminaries.Remark.Structure.Complicated}
 Not all the ideals of $\sR$\ are of the form $\id{(x_1-1)^{i_1},\dots , (x_n - 1)^{i_n}}$. 
 As a counter-example, we consider
 $$ \hat{\sR} = \frac{\F_q[x,y]}{\id{x^{p^{s_1}} -1,y^{p^{s_2}} -1}}, $$
 $\hat{I} = \id{x^{p^{s_1}} -1,y^{p^{s_2}} -1} $ and let $J = \id{(x - 1)(y -1)} + \hat{I}$.
 Suppose that there  exist $p^{s_1} > m > 0$\ and $p^{s_2} > n > 0$\ such that
 \be
  J = \id{(x-1)^m,(y-1)^n} + \hat{I}.\nn
 \ee
 Then $(x-1)^m + \hat{I} \in J = \id{ (x-1)(y-1) } + \hat{I}$. So,
  for some $g(x,y)\in \F_q[x,y]$, we have, $(x-1)^m - g(x,y)(x-1)(y-1) \in
      \hat{I} = \id{x^{p^{s_1}} -1,y^{p^{s_2}} -1} $. Therefore
 \be
  \label{Preliminaries.Remark.Structure.Complicated.eq}
 (x-1)^m - g(x,y)(x-1)(y-1) = \alpha_1(x,y)(x-1)^{p^{s_1}} + \alpha_2(x,y)(y-1)^{p^{s_2}}
 \ee
 for some $\alpha_1(x,y), \alpha_2(x,y) \in \F_q[x,y]$. Evaluating both sides of
 (\ref{Preliminaries.Remark.Structure.Complicated.eq}) at $y=1$, we get
 \be
  (x-1)^m = \alpha_1(x,1)(x-1)^{p^{s_1}}.\nn
 \ee
 This is a contradiction because $m < p^{s_1}$.
\end{remark}

\begin{remark}
 \label{Remark.Isomorhpism.Basic.Change.of.Variable}
 We have the ring isomorphism
 \be
  \frac{\FqR }{ \id{x_1^{p^{s_1}} - 1, x_2^{p^{s_2}} - 1,\dots, x_n^{p^{s_n}} - 1 }}
    \cong \frac{\F_q[y_1,\dots , y_n]}{ \id{y_1^{p^{s_1}}, y_2^{p^{s_2}},\dots, y_n^{p^{s_n}} }} \nn
 \ee
 where the isomorphism is established by sending $x_i-1$\ to $y_i$.
\end{remark}

Let $p$\ be an odd prime and let $GR(p^a,m)$\ be the Galois ring of
characteristic $p^a$\ with $p^{am}$\ elements. Let $N$\ be an odd
integer. By \cite[Proposition 5.1]{dinh_2004}, we know that the
rings
$$ \frac{GR(p^a,m)[x]}{ \id{x^N - 1} }\quad \mbox{and} \quad \frac{GR(p^a,m)[x]}{ \id{x^N +1} } $$
are isomorphic. This is generalized to multi-variable constacyclic codes in the next lemma.
\begin{lemma}\label{Preliminaries.Isomorphism}
  Let $c_1, c_2, \dots ,c_n \in GR(p^a,m)^{*}$\ be some units of $GR(p^a,m)$.
  Let $k_1,k_2,\dots, k_n$\ be odd positive integers.
 The map
 $$ \xi: \frac{GR(p^a,m)[x_1,x_2,\dots, x_n]}{\id{ x_1^{k_1}-1, x_2^{k_2}-1,\dots , x_n^{k_n}-1 } } \rightarrow
      \frac{GR(p^a,m)[x_1,x_2,\dots, x_n]}{\id{ x_1^{n_1}-c_1^{k_1}, x_2^{k_2}-c_2^{n_2},\dots , x_r^{k_r}-c_n^{k_n} } }$$
  defined by
  $$ f(x_1,x_2,\dots, x_n) \mapsto f( c_1^{-1}x_1,c_2^{-1}x_2, \dots , c_n^{-1}x_n )$$
  is a ring isomorphism.
\end{lemma}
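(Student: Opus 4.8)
The plan is to realize $\xi$ as the map induced on quotients by the substitution automorphism of the polynomial ring, and to exhibit an explicit inverse. Write $R = GR(p^a,m)$ and let $\Phi \colon R[x_1,\dots,x_n] \to R[x_1,\dots,x_n]$ be the $R$-algebra homomorphism determined by $x_j \mapsto c_j^{-1}x_j$ for each $j$; concretely $\Phi(f) = f(c_1^{-1}x_1,\dots,c_n^{-1}x_n)$. Since each $c_j$ is a unit of $R$, the substitution $x_j \mapsto c_j x_j$ defines an $R$-algebra homomorphism $\Psi$ with $\Psi \circ \Phi = \Phi \circ \Psi = \mathrm{id}$, so $\Phi$ is an automorphism of the polynomial ring. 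The map $\xi$ in the statement is exactly the map $\Phi$ induces on the two quotients, so the whole proof reduces to checking that $\Phi$ descends and that the descended map is bijective.

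First I would check that $\Phi$ carries the defining ideal of the source into the defining ideal of the target. For each $j$,
$$\Phi(x_j^{k_j}-1) = (c_j^{-1}x_j)^{k_j}-1 = c_j^{-k_j}\bigl(x_j^{k_j}-c_j^{k_j}\bigr),$$
and since $c_j^{-k_j}$ is a unit this element lies in $\id{ x_1^{k_1}-c_1^{k_1},\dots,x_n^{k_n}-c_n^{k_n} }$. Hence $\Phi$ maps $\id{ x_1^{k_1}-1,\dots,x_n^{k_n}-1 }$ into the target ideal, and therefore descends to a well-defined ring homomorphism $\xi$ between the two quotient rings.

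Symmetrically, the same computation applied to $\Psi$ gives $\Psi(x_j^{k_j}-c_j^{k_j}) = c_j^{k_j}(x_j^{k_j}-1)$, so $\Psi$ maps the target ideal into the source ideal and descends to a ring homomorphism $\eta$ in the opposite direction. Because $\Phi$ and $\Psi$ are mutually inverse on the polynomial ring, the induced maps $\xi$ and $\eta$ are mutually inverse on the quotients; thus $\xi$ is a ring isomorphism.

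The only point requiring genuine verification — the step I would flag as the ``main obstacle,'' though it is entirely routine — is confirming that $\Phi$ and $\Psi$ respect the respective ideals, i.e.\ the two displayed computations above; everything else follows formally from the fact that substitution by units is an automorphism of the polynomial ring. I would also note that the oddness of the $k_j$ plays no role in this argument: it is inherited from the one-variable statement of \cite{dinh_2004}, where one specializes to $c_j = -1$ and uses $(-1)^{k_j} = -1$ to recover the isomorphism between $\id{x^{N}-1}$ and $\id{x^{N}+1}$.
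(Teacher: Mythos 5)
Your proposal is correct, and it hinges on exactly the same key computation as the paper's proof, namely $(c_j^{-1}x_j)^{k_j}-1=c_j^{-k_j}\bigl(x_j^{k_j}-c_j^{k_j}\bigr)$, but the two arguments are organized differently. The paper never leaves the quotient rings: it proves an element-wise chain of equivalences ($f\equiv g$ modulo the source ideal if and only if $f(c_1^{-1}x_1,\dots,c_n^{-1}x_n)\equiv g(c_1^{-1}x_1,\dots,c_n^{-1}x_n)$ modulo the target ideal), which yields well-definedness and injectivity at once; it then verifies multiplicativity by hand (first for monomial multiples, then extending by linearity), and finally obtains surjectivity by exhibiting $h(c_1x_1,\dots,c_nx_n)$ as a preimage of $h$. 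You instead work one level up, in $GR(p^a,m)[x_1,\dots,x_n]$: substitution by units is an automorphism $\Phi$ with explicit inverse $\Psi$, each carries the relevant defining ideal into the other (your two displayed computations), so both descend to the quotients and the descended maps are mutually inverse. This buys you the homomorphism property, injectivity, and surjectivity simultaneously and for free from the universal property of quotient rings, where the paper's direct verification is more elementary but longer, the multiplicativity check in particular. Your closing observation is also accurate and consistent with the paper: neither proof uses the oddness of the $k_j$; that hypothesis matters only when the lemma is specialized to $c_j=-1$ with $p$ odd, as in Corollary \ref{Corollary.Isomorphism.of.Cyclic.and.Negacyclic.Codes}, where $c_j^{k_j}=-1$ produces the negacyclic ambient ring. (Both your proof and the paper's also tacitly correct the typos in the displayed statement of the lemma: the exponents $n_1$, $n_2$ and the variable $x_r^{k_r}$ should all read $x_j^{k_j}-c_j^{k_j}$.)
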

\begin{proof}
 Let $I = \id{ x_1^{k_1}-1,x_2^{k_2} -1,\dots ,x_n^{k_n}-1 }$\ and
 $J = \id{ x_1^{k_1}- c_1^{k_1},x^{k_2}- c_2^{k_2},\dots , x_n^{k_n}- c_n^{k_n} } $.
 For every $k_i$, we have $ (c_i^{-1}x_i)^{k_i} - 1 = c_i^{-k_i}(x_i^{k_i} - c_i^{k_i}) $.
 Therefore
 \be
    f(x_1,x_2,\dots , x_n) \equiv g(x_1,x_2,\dots , x_n) \mod I \nn
 \ee
 if and only if there are polynomials $h_e(x_1,x_2, \dots , x_n), e \in \{ 1,2,\dots ,n \}$\
 such that
 \be
    && f(x_1,x_2,\dots , x_n) - g(x_1,x_2,\dots , x_n) \nn \\
        & = & h_1(x_1,x_2,\dots , x_n)(x_1^{k_1} - 1)
        h_2(x_1,x_2,\dots , x_n)(x_2^{k_2} - 1) + \cdots + h_n(x_1,x_2,\dots , x_n)(x_n^{k_n} - 1)\nn
 \ee
 if and only if
 \be
    &&f(c_1^{-1}x_1, c_2^{-1}x_2,\dots , c_n^{-1} x_n) - g(c_1^{-1}x_1 , c_2^{-1}x_2,\dots , c_n^{-1} x_n)\nn\\
    & = & h_1(c_1^{-1}x_1, c_2^{-1} x_2,\dots , c_n^{-1} x_n)((c_1^{-1}x_1)^{k_1} - 1) +
        h_2(c_1^{-1} x_1, c_2^{-1} x_2,\dots , c_n^{-1} x_n)(( c_2^{-1} x_2)^{k_2} - 1)  \nn\\
        && \; + \cdots + h_r(c_1^{-1} x_1, c_2^{-1} x_2,\dots , c_n^{-1} x_n)(( c_n^{-1} x_n)^{k_n} - 1)\nn\\
    & = & h_1(c_1^{-1}x_1, c_2^{-1} x_2,\dots , c_n^{-1} x_n) c_1^{-k_1} (x_1^{k_1} - c_1^{k_1})
        + h_2(c_1^{-1}x_1, c_2^{-1} x_2,\dots , c_n^{-1} x_n) c_2^{-k_2} (x_2^{k_2} - c_2^{k_2})\nn\\
    && \; + \cdots + h_n(c_1^{-1}x_1, c_2^{-1} x_2,\dots , c_n^{-1} x_n) c_n^{-k_n} (x_n^{k_n} - c_n^{k_n})\nn
 \ee
 if and only if
 \be
    f(x_1,x_2,\dots , x_n) \equiv g(x_1,x_2,\dots , x_n) \mod J. \nn
 \ee
 This implies that $\xi$\ is well-defined and $\xi$\ is injective.
 The fact that $\xi$\ respects addition is obvious. It is also easy to see that
 \be \label{Preliminaries.Isomorphism.Xi.Respects.Monomial.Mult}
    \xi (ax_1^{i_1}x_2^{i_2}\cdots x_n^{i_n} f(x_1,x_2, \dots ,x_n) ) =
    \xi( ax_1^{i_1}x_2^{i_2}\cdots x_n^{i_n} ) \xi ( f(x_1,x_2, \dots ,x_n) ).
 \ee
 Together with the fact that $\xi$\ is linear, (\ref{Preliminaries.Isomorphism.Xi.Respects.Monomial.Mult})
 implies that $\xi (f(x_1,x_2, \dots , x_n) \cdot g(x_1,x_2, \dots , x_n)) =
 \xi(f(x_1,x_2, \dots , x_n)) \xi(g(x_1,x_2, \dots , x_n)) $, for every
 $$f(x_1,x_2, \dots , x_n), g(x_1,x_2, \dots , x_n) \in \dd 
  \frac{GR(p^a,m)[x_1,x_2,\dots, x_n]}{\id{ x_1^{k_1}-1, x_2^{k_2}-1,\dots , x_n^{k_n}-1 } }.$$
 Thus $\xi$\ is a ring homomorphism. For every
 $$ h(x_1,x_2, \dots, x_n) \in \dd 
  \frac{GR(p^a,m)[x_1,x_2,\dots, x_n]}{\id{ x_1^{k_1}-c_1^{k_1}, x_2^{k_2}-c_2^{k_2},\dots , x_n^{k_n}-c_n^{k_n} } }, $$
 we have $\xi ( h(c_1x_1, c_2x_2, \dots, c_nx_n) ) = h(x_1,x_2, \dots, x_n)$. Hence $\xi$\
 is onto. Thus $\xi$\ is an isomorphism.
\end{proof}

Lemma \ref{Preliminaries.Isomorphism} tells us that, in our case,
we can work with negacyclic codes instead of cyclic codes.
More precisely, we have the following corollary.

\begin{corollary}
  \label{Corollary.Isomorphism.of.Cyclic.and.Negacyclic.Codes}
 The rings
 $$\dd \frac{\F_q[x_1,x_2,\dots , x_n]}{\id{ x_1^{p^{s_1}}-1, x_2^{p^{s_2}}-1,\dots , x_n^{p^{s_n}}-1 }} \quad
 \mbox{and} \quad \dd \frac{\F_q[x_1,x_2,\dots , x_n]}{\id{ x_1^{p^{s_1}} + 1, x_2^{p^{s_2}} + 1,\dots , x_n^{p^{s_n}} + 1 }}$$
 are isomorphic, where the isomorphism is established by sending each $x_i$\ to $-x_i$.
 In even characteristic, these rings are exactly the same.
\end{corollary}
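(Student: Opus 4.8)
The plan is to obtain the corollary as a direct specialization of Lemma \ref{Preliminaries.Isomorphism}. First I would regard the field $\F_q = \F_{p^a}$ as the Galois ring $GR(p,a)$, which has characteristic $p$ and $p^a$ elements, so that Lemma \ref{Preliminaries.Isomorphism} becomes available with this ring playing the role of $GR(p^a,m)$. Next I would set $k_i = p^{s_i}$ for $1 \le i \le n$; since $p$ is an odd prime, each $p^{s_i}$ is odd, which is exactly the hypothesis the lemma imposes on the exponents $k_i$.

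The key choice is to take the units $c_i = -1 \in \F_q^{*}$ for every $i$. Then $c_i^{-1} = -1$, so the isomorphism $\xi$ supplied by the lemma acts by $f(x_1,\dots,x_n) \mapsto f(-x_1,\dots,-x_n)$; that is, it sends each $x_i$ to $-x_i$, as claimed in the statement. It then remains only to identify the target ring. Because $p^{s_i}$ is odd we have $c_i^{k_i} = (-1)^{p^{s_i}} = -1$, whence $\id{x_i^{k_i} - c_i^{k_i}} = \id{x_i^{p^{s_i}} + 1}$. Thus the codomain of $\xi$ is precisely the negacyclic ambient space appearing in the corollary, and Lemma \ref{Preliminaries.Isomorphism} delivers the desired ring isomorphism together with its explicit description.

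For the final sentence I would argue directly rather than through the lemma: in characteristic $2$ we have $-1 = 1$, so $x_i^{p^{s_i}} + 1 = x_i^{p^{s_i}} - 1$ for each $i$, and hence the two defining ideals, and therefore the two quotient rings, literally coincide. (Consistently, the substitution $x_i \mapsto -x_i$ is then the identity map.)

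I do not expect a genuine obstacle, since everything reduces to the elementary identity $(-1)^{p^{s_i}} = -1$, valid because $p$ is odd. The only points deserving care are the bookkeeping in passing from the Galois-ring formulation of Lemma \ref{Preliminaries.Isomorphism} to the field $\F_q = GR(p,a)$, and the verification that the constants $c_i = -1$ produce exactly the ideal generators $x_i^{p^{s_i}} + 1$; both are routine.
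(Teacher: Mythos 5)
Your proposal is correct and matches the paper's own route: the corollary is stated there as an immediate specialization of Lemma \ref{Preliminaries.Isomorphism} with $c_i = -1$ and $k_i = p^{s_i}$ (odd for odd $p$), with the even-characteristic case handled by the trivial identity $-1 = 1$. Your extra care in identifying $\F_q$ with the Galois ring $GR(p,a)$ and checking $(-1)^{p^{s_i}} = -1$ is exactly the bookkeeping the paper leaves implicit.
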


Thus, for $C_1 = \id{ (x_1 - 1)^{i_1},(x_2 - 1)^{i_2}, \dots , (x_n - 1)^{i_n} }
\subset \displaystyle \frac{\F_q[x_1,x_2,\dots , x_n]}
{ \id{ x_1^{p^{s_1}} - 1, x_2^{p^{s_2}} - 1, \dots , x_n^{p^{s_n}} - 1 } } $\ and
$C_2 =  \id{ (x_1 + 1)^{i_1},(x_2 + 1)^{i_2}, \dots , (x_n + 1)^{i_n} }
\subset \displaystyle \frac{\F_q[x_1,x_2,\dots , x_n]}
{ \id{ x_1^{p^{s_1}} + 1, x_2^{p^{s_2}} + 1, \dots , x_n^{p^{s_n}} + 1 } }$,
$C_1$\ and $C_2$\ have the same distance distribution and, consequently, have the same minimum Hamming distance.

\section{Monomial-like codes}
\label{Section.Monomial.Like.Codes}

The elements of $\FqR$\ are $\F_q$-linear combinations of monomials
$x_1^{\alpha_1}x_2^{\alpha_2}\dots x_n^{\alpha_n}$.
From this perspective, one can say that monomials are building blocks of polynomials.
Analogously, as a consequence of Lemma \ref{Preliminaries.Lemma.R.is.local}, the elements of
$\sR$\
are $\F_q$-linear combinations of the terms $(x_1 - 1)^{\alpha_1}(x_2 - 1)^{\alpha_2} \dots (x_n - 1)^{\alpha_n}$,
where $(\alpha_1, \dots , \alpha_n) \neq (0, \dots , 0)$.
So, as was done in \cite{Lakatos_AAECC}, we call the terms 
$(x_1 - 1)^{\alpha_1}(x_2 - 1)^{\alpha_2} \dots (x_n - 1)^{\alpha_n}$\ as ``monomials'' and
ideals generated by monomials as  ``monomial ideals''. 
``Monomial ideals'' of $\frac{\FqR}{ \id{x_1^p -1, \dots, x_n^p -1} }$\ were studied in \cite{Lakatos_AAECC}.
We concentrate on a special class of monomial ideals
that are generated by a single monomial, in a more general ambient space. 
We call such ideals as ``\textit{monomial-like ideals}'' and
the corresponding codes as ``\textit{monomial-like codes}''.
Namely, monomial-like codes are of the form 
$ C = \id{ (x_1 - 1)^{\alpha_1}(x_2 - 1)^{\alpha_2} \cdots (x_n - 1)^{\alpha_n} } \subset \sR$.

Our aim is to determine the minimum Hamming distance of monomial-like codes.
Let $C$\ be a monomial-like code.
In one variable  case, the minimum Hamming  distance of $C$\ was computed in \cite{ozadam_2009_2} and
\cite{dinh_2008}.
It turns out that, in multivariate case, $C$\ can be considered as a ``product'' of single variable codes.
This decomposition allows us to express the minimum Hamming distance of $C$\ in terms of the Hamming distances
of cyclic codes of length $p^{s_j}$.

Below we define the product of two linear codes.
For the general theory of  product codes, we refer to \cite[Chapter 18]{macwilliams_1977_1}.
\begin{definition}
  \label{Definition.Product.Code}
 The product  of two linear codes $C,C^{'}$\ over $\F_q$\ is the linear code $C \otimes C^{'}$\
 whose codewords are all the two dimensional arrays for which each row is a codeword in $C$\
 and each column is a codeword in $C^{'}$.
\end{definition}

\begin{remark}
  \label{Remark.Properties.of.Product.Codes}
 The following are some well-known facts about the product codes.
 \begin{enumerate}
  \item If $C$\ and $C^{'}$\ are $[n,k,d]$\ and $[n^{'},k^{'},d^{'}]$\ codes respectively,
    then $C \otimes C^{'}$\ is a $[nn^{'}, kk^{'},dd^{'}]$\ code.
  \item If $G$\ and $G^{'}$\ are generator matrices of $C$\ and $C^{'}$\ respectively,
    then $G \otimes G^{'}$\ is a generator matrix of $C \otimes C^{'}$, where $\otimes$\ denotes
    the Kronecker product of matrices and the codewords of $C \otimes C^{'}$\ are seen as
    concatenations of the rows in arrays in $C \otimes C^{'}$.
 \end{enumerate}
\end{remark}

First, we prove that, in two variable case,
monomial-like codes are product codes.

\begin{theorem}
 \label{Theorem.Monomial.Codes.Are.Product.Codes}
  Let $n_1,n_2$\ be positive integers and let
  \be
    \hat{\sR} = \frac{\F_q[x,y]}{ \id{x^{n_1}-1,y^{n_2}-1} },\nn\\
    \sR_x = \frac{\F_q[x]}{ \id{x^{n_1}-1} },\quad \sR_y = \frac{\F_q[y]}{\id{y^{n_2}-1}}. \nn
  \ee
  Suppose that $(x-1)^{k_1} | x^{n_1} -1$\ and $(y-1)^{k_2} | y^{n_2} -1$.
  The code $C = \id{(x-1)^{k_1}(y-1)^{k_2}} \subset \hat{\sR}$\ is the \textit{``product''} of the codes
  $C_x = \id{(x-1)^{k_1}} \subset \sR_x$\ and $C_y = \id{(y-1)^{k_2}} \subset \sR_y$, i.e.,
  $ C = C_x \otimes C_y $.
\end{theorem}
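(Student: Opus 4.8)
The plan is to pass through the algebraic tensor product and to reduce the statement to a clean ideal identity. First I would record the standard ring isomorphism
$$ \hat{\sR} = \frac{\F_q[x,y]}{\id{x^{n_1}-1, y^{n_2}-1}} \;\cong\; \frac{\F_q[x]}{\id{x^{n_1}-1}} \otimes_{\F_q} \frac{\F_q[y]}{\id{y^{n_2}-1}} = \sR_x \otimes_{\F_q} \sR_y, $$
sending the class of $x^i y^j$ to $x^i \otimes y^j$. Under the coordinatization used in Section \ref{Section.Preliminaries}, a codeword $\sum_{i,j} f_{ij} x^i y^j$ is exactly the two-dimensional array $(f_{ij})$, and a simple tensor $u \otimes v$ with $u = \sum_i u_i x^i \in \sR_x$ and $v = \sum_j v_j y^j \in \sR_y$ corresponds to the outer-product array $(u_i v_j)$. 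Thus proving $C = C_x \otimes C_y$ amounts to proving the subspace identity $C = C_x \otimes_{\F_q} C_y$ inside $\sR_x \otimes_{\F_q} \sR_y$, where on the right I mean the image of the algebraic tensor product of the two ideals.

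Next I would connect this algebraic tensor product to the combinatorial product code of Definition \ref{Definition.Product.Code}. If $G_x$ and $G_y$ are generator matrices of $C_x$ and $C_y$, then the rows of the Kronecker product $G_x \otimes G_y$ are exactly the outer products of rows of $G_x$ with rows of $G_y$, i.e. the coordinate images of the simple tensors $u \otimes v$ for $u,v$ ranging over the two bases. By Remark \ref{Remark.Properties.of.Product.Codes}(2) the matrix $G_x \otimes G_y$ generates $C_x \otimes C_y$; hence the product code is precisely the span of the outer-product arrays, which is the coordinate image of $C_x \otimes_{\F_q} C_y$. Since $\F_q$ is a field, tensoring is exact, so $C_x \otimes_{\F_q} C_y \hookrightarrow \sR_x \otimes_{\F_q} \sR_y$ is injective and $\dim_{\F_q}(C_x \otimes_{\F_q} C_y) = \dim C_x \cdot \dim C_y$, which keeps the dimension bookkeeping honest.

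It then remains to show $C = C_x \otimes_{\F_q} C_y$ as ideals of $\hat{\sR}$, which I would do by two inclusions. For the inclusion $\supseteq$, a spanning element of $C_x \otimes_{\F_q} C_y$ is a simple tensor $u \otimes v$ with $u \in C_x$ and $v \in C_y$; writing $u = a(x)(x-1)^{k_1}$ and $v = b(y)(y-1)^{k_2}$ (possible since $C_x = \id{(x-1)^{k_1}}$ and $C_y = \id{(y-1)^{k_2}}$), the image of $u \otimes v$ is $a(x)b(y)(x-1)^{k_1}(y-1)^{k_2}$, which lies in $C$; as $C$ is a subspace this gives $C_x \otimes_{\F_q} C_y \subseteq C$. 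For the inclusion $\subseteq$, a direct check shows the tensor product of ideals is again an ideal, since for $a \otimes b \in \hat{\sR}$ and $u \otimes v \in C_x \otimes_{\F_q} C_y$ one has $(a\otimes b)(u \otimes v) = (au)\otimes(bv)$ with $au \in C_x$ and $bv \in C_y$, the general case following by bilinearity; this ideal contains the image $(x-1)^{k_1}\otimes (y-1)^{k_2}$ of the generator of $C$, whence $C \subseteq C_x \otimes_{\F_q} C_y$.

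I expect the main obstacle to be bookkeeping rather than depth: carefully matching the monomial basis of $\hat{\sR}$ with the array coordinates and with the tensor basis so that ``each row lies in $C_x$ and each column lies in $C_y$'' in Definition \ref{Definition.Product.Code} lines up with the two factors in the correct order, and confirming that the span of outer products really is the product code (a standard fact from \cite[Chapter 18]{macwilliams_1977_1}). The divisibility hypotheses $(x-1)^{k_1}\mid x^{n_1}-1$ and $(y-1)^{k_2}\mid y^{n_2}-1$ are not strictly needed for the ideal identity itself, but they guarantee that $C_x$ and $C_y$ are genuinely the one-variable monomial-like codes with generator polynomials $(x-1)^{k_1}$ and $(y-1)^{k_2}$ and fixed dimensions $n_1-k_1$ and $n_2-k_2$, which is what makes the product decomposition meaningful and, later, lets one multiply the minimum Hamming distances.
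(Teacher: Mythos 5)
Your proof is correct, but it takes a genuinely different route from the paper's. The paper argues by explicit matrix bookkeeping: it writes down generator matrices $G_x$ and $G_y$ whose rows are the coefficient vectors of the shifts $x^i(x-1)^{k_1}$ and $y^j(y-1)^{k_2}$, forms the Kronecker product $G_x\otimes G_y$, writes down a generator matrix $G$ for $C$ coming from the spanning set $\{x^iy^j(x-1)^{k_1}(y-1)^{k_2}\}$ ordered via the monomial order $x>y$, checks that $G_x\otimes G_y=G$, and concludes by Remark \ref{Remark.Properties.of.Product.Codes}(2). You instead work structurally: you identify $\hat{\sR}\cong\sR_x\otimes_{\F_q}\sR_y$, use Remark \ref{Remark.Properties.of.Product.Codes}(2) once to recognize the combinatorial product code of Definition \ref{Definition.Product.Code} as the coordinate image of the algebraic tensor product $C_x\otimes_{\F_q}C_y$, and then prove the ideal identity $C=C_x\otimes_{\F_q}C_y$ by double inclusion: simple tensors $a(x)(x-1)^{k_1}\otimes b(y)(y-1)^{k_2}$ map into $C$, and conversely the image of $C_x\otimes_{\F_q}C_y$ is an ideal of $\hat{\sR}$ containing the generator $(x-1)^{k_1}(y-1)^{k_2}$, hence contains $C$. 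Both arguments ultimately lean on the same Remark to reconcile the row/column definition of product codes with the Kronecker/tensor description, so the mathematical core is shared; the difference is in how the identification of $C$ with that object is established. Your route buys cleanliness and generality: it avoids the error-prone step the paper waves through as ``it is easily seen that $G_x\otimes G_y=G$'' (which hides delicate ordering conventions), and it extends to $n$ variables immediately by associativity of $\otimes_{\F_q}$, whereas the paper promotes Theorem \ref{Theorem.Monomial.Codes.Are.Product.Codes} to Theorem \ref{Theorem.Generalisation.Monomial.Codes.Are.Product.Codes} by repeating the matrix argument inductively. The paper's route buys an explicit generator matrix for $C$ in concrete coordinates, which is of independent practical value. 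Your closing observation is also accurate: the divisibility hypotheses $(x-1)^{k_1}\mid x^{n_1}-1$ and $(y-1)^{k_2}\mid y^{n_2}-1$ are not needed for the ideal identity itself, but they fix $\dim C_x=n_1-k_1$ and $\dim C_y=n_2-k_2$, which is what makes the decomposition useful for the dimension and distance statements that follow.
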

\begin{proof}
 Let
 \be
  g(x) & = & (x-1)^{k_1} = g_{k_1}x^{k_1} + \cdots + g_{1}x + g_0, \nn\\
  h(y) & = & (y-1)^{k_2} = h_{k_2}y^{k_2} + \cdots + h_{1}y + h_0.  \nn
 \ee
 Then\\
 \be G_x & = & \left[ \begin{array}{rrrrrrrr}
     0 & \dots & 0 & 0 & g_{k_1} & \dots & g_1 & g_0\\
     0 & \dots & 0 & g_{k_1} & \dots & g_1 & g_0 & 0\\
     \vdots & & & & & & & \vdots\\
     g_{k_1} & \dots & g_1 & g_0 & 0 & \dots & 0 & 0\\
  \end{array}\right],\nn\\
  G_y & = & \left[ \begin{array}{rrrrrrrr}
     0 & \dots & 0 & 0 & h_{k_2} & \dots & h_1 & h_0\\
     0 & \dots & 0 & h_{k_2} & \dots & h_1 & h_0 & 0\\
     \vdots & & & & & & & \vdots\\
     h_{k_2} & \dots & h_1 & h_0 & 0 & \dots & 0 & 0\\
  \end{array}\right]\nn
 \ee
 are two generator matrices for $C_x$\ and $C_y$, respectively.
 The Kronecker product of $G_x$\ and $G_y$\ is an $d_1 \cdot d_2 \times n_1\cdot n_2$\ matrix given by
 {\tiny{\be\nn
   G_x \otimes G_y = \\
   \left[
      \begin{array}{rrrrrrrrrrrrrrrr}
       0 & 0 &\dots & 0 & 0 & 0 & g_{k_1}h_{k_2} & g_{k_1}h_{k_2-1} & \cdots & g_{k_{1}}h_0 & g_{k_1 -1}h_{k_2} & \dots & g_0h_{k_2} & \dots & g_0h_1 & g_0h_0\\
       0 & \dots & 0 & g_{k_1}h_{k_2} & g_{k_1}h_{k_2-1} & \cdots & g_{k_{1}}h_0 & g_{k_1 -1}h_{k_2} & \dots & g_0h_{k_2} & \dots & g_0h_1 & g_0h_0 & 0 & \cdots & 0\\
       \vdots &&&&&&&&&&&&&&&\vdots\\
       g_{k_1}h_{k_2} & g_{k_1}h_{k_2-1} & \cdots & g_{k_{1}}h_0 & g_{k_1 -1}h_{k_2} & \dots & g_0h_{k_2} & \dots & g_0h_1 & g_0h_0 & 0 & 0 & \cdots  & 0 & 0 & 0\\
      \end{array}
   \right].&&\nn
 \ee}}

 Next, for a polynomial
 \be\nn
  f(x,y) = \sum_{
    \begin{array}{c}
      0 \le  i <  n_1\\
      0 \le  j <  n_2\\
    \end{array}
  }c_{ij}x^iy^j \in \F_q[x,y],
 \ee
 we use the monomial ordering $x>y$\ to order its terms.
 According to this ordering, we identify $f(x,y)$\ with the tuple
 $(c_{n_1-1,n_2-1},c_{n_1-1,n_2-2},\dots , c_{n_1,0}, \dots, c_{n_1-2,n_2-1},\dots,c_{n_1-2,0},\dots,c_{0,0})$.
 Since the elements of $C = \id{(x-1)^{k_1}(y-1)^{k_2}} \subset \hat{\sR}$\ are exactly all the $\F_q$-linear combinations of the
 elements of the set
 \be\nn
  \beta = \{ x^iy^j(x-1)^{k_1}(y-1)^{k_2}: \quad 0 \le i < n - k_1,\quad 0 \le j < n- k_2 \},
 \ee
 we obtain a generator matrix for $C$\ as
 {\tiny{\be\nn
   G = \\
   \left[
      \begin{array}{rrrrrrrrrrrrrrrr}
       0 & 0 &\dots & 0 & 0 & 0 & g_{k_1}h_{k_2} & g_{k_1}h_{k_2-1} & \cdots & g_{k_{1}}h_0 & g_{k_1 -1}h_{k_2} & \dots & g_0h_{k_2} & \dots & g_0h_1 & g_0h_0\\
       0 & \dots & 0 & g_{k_1}h_{k_2} & g_{k_1}h_{k_2-1} & \cdots & g_{k_{1}}h_0 & g_{k_1 -1}h_{k_2} & \dots & g_0h_{k_2} & \dots & g_0h_1 & g_0h_0 & 0 & \cdots & 0\\
       \vdots &&&&&&&&&&&&&&&\vdots\\
       g_{k_1}h_{k_2} & g_{k_1}h_{k_2-1} & \cdots & g_{k_{1}}h_0 & g_{k_1 -1}h_{k_2} & \dots & g_0h_{k_2} & \dots & g_0h_1 & g_0h_0 & 0 & 0 & \cdots  & 0 & 0 & 0\\
      \end{array}
   \right].&&\nn
 \ee}}
 It is easily seen that $G_x \otimes G_y = G$.
 Note that in the above construction, we multiplied $(x-1)^{k_1}(y-1)^{k_2}$\ with the monomials in the order\\
  $x^{0}y^{0},x^{0}y^{1},x^{0}y^{2},\dots , x^{0}y^{n-k_2-1},x^{1}y^{0},\dots , x^{1}y^{n-k_2-1},\dots,
    x^{n-k_1-1}y^{0},x^{n-k_1-1}y^{1},x^{n-k_1-1}y^{n-k_2-1}$\ and considered the corresponding tuples and
  placed these tuples into $G$\ in that order.
\end{proof}

Using the arguments in the proof of Theorem \ref{Theorem.Monomial.Codes.Are.Product.Codes} inductively,
it is straightforward to generalize Theorem \ref{Theorem.Monomial.Codes.Are.Product.Codes} to the multivariate case.

\begin{theorem}
 \label{Theorem.Generalisation.Monomial.Codes.Are.Product.Codes}
  Let $r_1, \dots , r_n,i_1,\dots , i_n$\ be positive integers and let
  \be
    \sR^{'} = \frac{\F_q[x_1, \dots, x_n]}{ \id{x_1^{r_1}-1, \dots, x_n^{r_n}-1} },\quad
    \sR_{x_j} = \frac{\F_q[x_j]}{ \id{x_j^{r_j}-1} }. \nn
  \ee
  Suppose that $(x_j-1)^{i_j} | x_j^{r_j} -1$\ for all $1 \le j \le n$.
  The code 
  \be
     C = \id{ (x_1 - 1)^{i_1} \cdots (x_r - 1)^{i_r} }\nn
  \ee
   is the \textit{``product''} of the codes
  $C_{x_j} = \id{ (x_j -1)^{i_j} } \subset \sR_{x_j}$, i.e.,
  \be
    C = ( \cdots (( C_{x_1} \otimes C_{x_2})\otimes C_{x_3}) \otimes \cdots ) \otimes C_{x_n}.\nn
  \ee
\end{theorem}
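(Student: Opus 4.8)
The plan is to argue by induction on the number of variables $n$, using the two-variable result, Theorem \ref{Theorem.Monomial.Codes.Are.Product.Codes}, both as the base case and as the engine of the inductive step. The base case $n=1$ is a tautology and $n=2$ is precisely Theorem \ref{Theorem.Monomial.Codes.Are.Product.Codes}. Two formal facts make the induction clean. First, by Remark \ref{Remark.Properties.of.Product.Codes}(2) the product of two codes has, as a generator matrix, the Kronecker product of their generator matrices; since the Kronecker product of matrices is associative, the iterated product code on the right-hand side of the statement has generator matrix $G_{x_1} \otimes G_{x_2} \otimes \cdots \otimes G_{x_n}$ for a fixed choice of association, where each $G_{x_j}$ is the circulant-type generator matrix of $C_{x_j}$ built from the coefficients of $(x_j-1)^{i_j}$ exactly as in the proof of Theorem \ref{Theorem.Monomial.Codes.Are.Product.Codes}. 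These matrices are well-defined because the hypothesis $(x_j-1)^{i_j}\mid x_j^{r_j}-1$ holds separately in each variable, so that each $C_{x_j}$ is a one-variable code of the type handled there. Second, I fix the lexicographic monomial ordering $x_1 > x_2 > \cdots > x_n$ to index the coordinates of $\sR'$, so that the monomial $x_1^{a_1}\cdots x_n^{a_n}$ sits in the coordinate whose position is the mixed-radix integer with digits $(a_1,\dots,a_n)$ and radices $(r_1,\dots,r_n)$; this is the indexing under which the tensor structure is visible.

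For the inductive step, I assume the statement for $n-1$ variables and set $A = \F_q[x_1,\dots,x_{n-1}]/\langle x_1^{r_1}-1,\dots,x_{n-1}^{r_{n-1}}-1\rangle$, so that $\sR' \cong A[x_n]/\langle x_n^{r_n}-1\rangle$. I factor the generating monomial as $(x_1-1)^{i_1}\cdots(x_n-1)^{i_n} = P\,(x_n-1)^{i_n}$ with $P = (x_1-1)^{i_1}\cdots(x_{n-1}-1)^{i_{n-1}}$. By the inductive hypothesis the code $C'$ generated by $P$ in the $(n-1)$-variable ring is the iterated product $(\cdots(C_{x_1}\otimes C_{x_2})\otimes\cdots)\otimes C_{x_{n-1}}$, with generator matrix $G' = G_{x_1}\otimes\cdots\otimes G_{x_{n-1}}$. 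I then repeat the matrix computation of Theorem \ref{Theorem.Monomial.Codes.Are.Product.Codes}, now treating the first $n-1$ variables as a single block coordinate carrying the code $C'$ and $x_n$ as the second coordinate: the codewords of $C$ are the $\F_q$-linear combinations of the shifts $x_1^{j_1}\cdots x_{n-1}^{j_{n-1}} x_n^{j_n}\,P\,(x_n-1)^{i_n}$ with $0 \le j_\ell < r_\ell - i_\ell$, and listing these shifts in lexicographic order reproduces, row by row, the matrix $G'\otimes G_{x_n}$. Hence $C = C'\otimes C_{x_n} = (\cdots\otimes C_{x_{n-1}})\otimes C_{x_n}$, which is the desired factorization.

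The only genuinely delicate point, and the one I expect to be the main obstacle, is the bookkeeping that reconciles the two coordinate indexings. I must check that when the first $n-1$ variables are collapsed into one block coordinate ordered by its own induced lexicographic order, the resulting pair (block position, $x_n$-position) corresponds under the mixed-radix identification to the full lexicographic position of $x_1^{a_1}\cdots x_n^{a_n}$ in $\sR'$. Equivalently, I must verify that $G'\otimes G_{x_n}$, whose columns are naturally indexed by (mixed-radix position in the first $n-1$ radices, position in radix $r_n$), has its columns in the same order as the single mixed-radix indexing on all $n$ radices. This compatibility is exactly what makes $G_{x_1}\otimes\cdots\otimes G_{x_n}$ unambiguous, and it is forced by the fact that lexicographic order treats $x_n$ as the least significant digit. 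The verification is routine but notation-heavy, and everything else in the argument is purely formal.
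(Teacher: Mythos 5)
Your proposal is correct and is essentially the paper's own argument: the paper gives no detailed proof, stating only that the result follows ``using the arguments in the proof of Theorem \ref{Theorem.Monomial.Codes.Are.Product.Codes} inductively,'' which is precisely the induction you carry out (two-variable case as base and engine, first $n-1$ variables collapsed into a block coordinate). Your explicit attention to the lexicographic/mixed-radix column indexing fills in the bookkeeping the paper leaves implicit.
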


Using Theorem \ref{Theorem.Generalisation.Monomial.Codes.Are.Product.Codes}, we determine
the minimum Hamming distance of monomial-like codes.
\begin{theorem}
 \label{Theorem.Minimum.Hamming.Weight.of.Monimial-Like.Codes}
 Let $C = \id{ (x_1 - 1)^{i_1} \cdots (x_n - 1)^{i_n} }\subset \sR$.
 Let $\sR_{x_j} = \frac{\F_q[x_j]}{\id{x^{p^{s_j}} -1}}$\ and $C_{x_j} = \id{ (x_j - 1)^{i_j} } \subset R_{x_j}$.
 Then $d_H(C) = \prod_{j=1}^{n}d_H(C_{x_j})$, where
 $d_H(C_{x_j})$\ is as given in \cite[Theorem 6.4]{dinh_2008} and \cite[Theorem 1]{ozadam_2009_2}.
\end{theorem}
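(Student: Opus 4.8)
The claim is that the minimum distance of the multivariate monomial-like code $C$ factors as the product of the minimum distances of the single-variable codes $C_{x_j}$. Let me sketch how I would prove this.

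The strategy is to combine two earlier results in the paper. First, Theorem \ref{Theorem.Generalisation.Monomial.Codes.Are.Product.Codes} tells us that
$$ C = ( \cdots (( C_{x_1} \otimes C_{x_2})\otimes C_{x_3}) \otimes \cdots ) \otimes C_{x_n}, $$
so $C$ is an iterated product of the one-variable monomial-like codes $C_{x_j} = \id{(x_j-1)^{i_j}} \subset \sR_{x_j}$. Second, Remark \ref{Remark.Properties.of.Product.Codes}(1) records the well-known fact that the product of an $[n,k,d]$ code with an $[n',k',d']$ code is an $[nn',kk',dd']$ code; in particular the minimum distances multiply. The plan is therefore to apply the product-distance formula repeatedly.

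I would proceed by induction on the number of variables $n$. The base case $n=1$ is trivial, since then $C = C_{x_1}$ and the formula reads $d_H(C) = d_H(C_{x_1})$. For the inductive step, write $D = ( \cdots (C_{x_1} \otimes C_{x_2}) \otimes \cdots ) \otimes C_{x_{n-1}}$, so that $C = D \otimes C_{x_n}$ by Theorem \ref{Theorem.Generalisation.Monomial.Codes.Are.Product.Codes}. By the induction hypothesis, $d_H(D) = \prod_{j=1}^{n-1} d_H(C_{x_j})$. Applying Remark \ref{Remark.Properties.of.Product.Codes}(1) to the single product $D \otimes C_{x_n}$ then yields $d_H(C) = d_H(D) \cdot d_H(C_{x_n}) = \prod_{j=1}^{n} d_H(C_{x_j})$, which closes the induction. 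Finally, the value of each factor $d_H(C_{x_j})$ is supplied by the one-variable computation in \cite[Theorem 6.4]{dinh_2008} and \cite[Theorem 1]{ozadam_2009_2}, completing the proof.

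The only point requiring care — and the main (though mild) obstacle — is the bookkeeping that identifies the ambient ring $\sR$ with the array/coordinate structure underlying Definition \ref{Definition.Product.Code}. The product-code machinery is stated for codes viewed as arrays with a fixed coordinate labeling, whereas $C$ lives in the quotient ring $\sR$ with coordinates indexed by the exponent set $L$ via a monomial ordering. I would therefore note explicitly that the monomial ordering used in the proof of Theorem \ref{Theorem.Monomial.Codes.Are.Product.Codes} is exactly the flattening of the two-dimensional array into a tuple that makes the Kronecker product $G_{x} \otimes G_{y}$ coincide with the generator matrix $G$ of the product code, and that this identification is preserved at each stage of the iterated product. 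Once this coordinate compatibility is in place, the distance formula is a direct consequence of the standard product-code result, and no genuinely new estimate on weights is needed.
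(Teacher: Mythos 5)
Your proposal is correct and follows essentially the same route as the paper: decompose $C$ as an iterated product via Theorem \ref{Theorem.Generalisation.Monomial.Codes.Are.Product.Codes} and then apply the product-code distance fact from Remark \ref{Remark.Properties.of.Product.Codes} (the paper cites part (2), evidently a slip for part (1), which is the one you correctly invoke). Your explicit induction and the remark about coordinate bookkeeping merely spell out steps the paper leaves implicit.
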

\begin{proof}
 We have
  $C = ( \cdots (( C_{x_1} \otimes C_{x_2})\otimes C_{x_3}) \otimes \cdots ) \otimes C_{x_n}$\
  by Theorem \ref{Theorem.Generalisation.Monomial.Codes.Are.Product.Codes}.
  The result follows by Remark \ref{Remark.Properties.of.Product.Codes} (2).
\end{proof}

\subsection{The dual of monomial-like codes}\ \\

We determine the dual of
\be
  C = \id{ (x_1 - 1)^{N_1} \cdots (x_n - 1)^{N_n} } \subset 
      \frac{\F_q[x_1, \dots , x_n]}{ \id{x_1^{p^{s_1}} - 1 , \cdots , x_n^{p^{s_n}} - 1  } } = \sR. \nn
\ee
Let $L\subset \N^n$\ and $i=(i_1,\dots , i_n)\in L$.
We consider $\fm[x] \in \sR$\ in the form
\be
  \fm[n] = \sum_{i \in L}c_i(x_1 - 1)^{i_1}\cdots (x_n - 1)^{i_n}\nn
\ee
where $c_i \neq 0 $, for all $i \in L$.
We define
\be
  L_1 & = & \{ (i_1, \dots , i_n): \quad i_j < p^{s_j} - N_j\quad \forall \ 1\le j \le n \},\nn\\
  L_2 & = & \{ (i_1, \dots , i_n): \quad i_j \ge p^{s_j} - N_j\quad \mbox{for some } \quad 1 \le j \le n \}.\nn
\ee
This gives us a partition of $L$\ as $L = L_1 \sqcup L_2$. Since
\be
  (x_1 - 1)^{N_1} \cdots (x_n - 1)^{N_n}\fm[n] = 
    (x_1 - 1)^{N_1} \cdots (x_n - 1)^{N_n}\sum_{i \in L_1}c_i(x_1 - 1)^{i_1}\cdots (x_n - 1)^{i_n},\nn
\ee
 we deduce that
$(x_1 - 1)^{N_1} \cdots (x_n - 1)^{N_n}\fm[n] = 0$\ if and only if
$(x_1 - 1)^{N_1} \cdots (x_n - 1)^{N_n}\sum_{i \in L_1}c_i(x_1 - 1)^{i_1}\cdots (x_n - 1)^{i_n} = 0$\ if and only if
$(x_j - 1)^{p^{s_j} - N_j} | \fm[x]$\ for some $1 \le j \le n$.
Equivalently, if $\fm[n] \in C^{\perp}$, then $\fm[x] \in \id{ (x_1 - 1)^{p^{s_1} - N_1}, \dots , (x_n - 1)^{p^{s_n} - N_n} }$.
Conversely, $(x_j - 1)^{p^{s_j} - N_j}(x_1 - 1)^{N_1} \cdots (x_n - 1)^{N_n} = 0$\ for all $1 \le j \le n$.
This proves the following.

\begin{lemma}
  \label{Lemma.Dual.of.Monomial.like.Code}
 Let
 \be
  C = \id{ (x_1 - 1)^{N_1} \cdots (x_n - 1)^{N_n} } \subset 
      \frac{\F_q[x_1, \dots , x_n]}{ \id{x_1^{p^{s_1}} - 1 , \cdots , x_n^{p^{s_n}} - 1  } }. \nn
  \ee
  Then
  \be
    C^{\perp} = \id{ (x_1 - 1)^{p^{s_1} - N_1}, \dots , (x_n - 1)^{p^{s_n} - N_n} }.\nn
  \ee
\end{lemma}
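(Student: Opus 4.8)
The plan is to identify $C^{\perp}$ with the annihilator ideal $\mathrm{Ann}(C)=\{h\in\sR : h\cdot(x_1-1)^{N_1}\cdots(x_n-1)^{N_n}=0\}$ and then to prove that this annihilator equals $J=\id{(x_1-1)^{p^{s_1}-N_1},\dots,(x_n-1)^{p^{s_n}-N_n}}$. For the first identification I use the standard link between the Euclidean dual and the annihilator for cyclic codes. Writing codewords as coefficient vectors in the monomial basis $\{x_1^{a_1}\cdots x_n^{a_n}\}$, the Euclidean pairing $\langle f,g\rangle=\sum_a f_a g_a$ equals the constant term of $f\cdot\rho(g)$, where $\rho$ is the reversal automorphism $x_j\mapsto x_j^{-1}=x_j^{p^{s_j}-1}$. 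Since $C$ is an ideal, $g\in C^{\perp}$ forces $\langle x^{k}f,g\rangle=0$ for every shift $x^{k}$ and every $f\in C$; as these inner products are exactly the coefficients of $f\cdot\rho(g)$, this says $f\cdot\rho(g)=0$ for all $f\in C$, i.e. $\rho(g)\in\mathrm{Ann}(C)$ (recall $\mathrm{Ann}$ of an ideal is the annihilator of any generating element). Hence $C^{\perp}=\rho(\mathrm{Ann}(C))$.

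Next I would compute $\mathrm{Ann}(C)$ using the $(x_j-1)$-adic description of $\sR$ coming from Lemma~\ref{Preliminaries.Lemma.R.is.local}. Every $h\in\sR$ has a unique expansion $h=\sum_{i\in L}c_i(x_1-1)^{i_1}\cdots(x_n-1)^{i_n}$, and because $(x_j-1)^{p^{s_j}}=x_j^{p^{s_j}}-1=0$, multiplying by the generator annihilates every term whose index lies in $L_2$, leaving
\[
  h\cdot(x_1-1)^{N_1}\cdots(x_n-1)^{N_n}=\sum_{i\in L_1}c_i(x_1-1)^{i_1+N_1}\cdots(x_n-1)^{i_n+N_n}.
\]
The surviving ``monomials'' indexed by $i\in L_1$ are distinct elements of the $(x_j-1)$-basis of $\sR$, hence linearly independent, so the product vanishes precisely when $c_i=0$ for all $i\in L_1$. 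Thus $\mathrm{Ann}(C)$ is the $\F_q$-span of $\{(x_1-1)^{i_1}\cdots(x_n-1)^{i_n}:i\in L_2\}$.

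Finally I would recognize this span as $J$ and check that $\rho$ fixes it. Any basis element with $i\in L_2$ has some coordinate $i_j\ge p^{s_j}-N_j$, so it is $(x_j-1)^{p^{s_j}-N_j}$ times another monomial and lies in $J$; conversely the span is itself an ideal — multiplying such a monomial by any $x_\ell=(x_\ell-1)+1$ either raises one exponent while keeping some coordinate $\ge p^{s_j}-N_j$, or yields $0$ — and it contains each generator of $J$, so $J\subseteq\mathrm{Ann}(C)$ as well; hence $\mathrm{Ann}(C)=J$. Since $\rho\big((x_j-1)^{p^{s_j}-N_j}\big)=(-x_j^{-1})^{p^{s_j}-N_j}(x_j-1)^{p^{s_j}-N_j}$ is a unit multiple of the $j$-th generator, $\rho(J)=J$, and therefore $C^{\perp}=\rho(\mathrm{Ann}(C))=\rho(J)=J$. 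The one genuinely delicate point is the reversal: passing from $C^{\perp}$ to an annihilator introduces $\rho$, and the argument only closes because $J$ happens to be $\rho$-invariant; without that observation one would land on the reciprocal ideal rather than on $J$ itself. (As a sanity check, the dimension count $\dim J=|L_2|=\prod_j p^{s_j}-\prod_j(p^{s_j}-N_j)=\dim\sR-\dim C$ matches $\dim C^{\perp}$.)
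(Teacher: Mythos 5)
Your proof is correct and its core is essentially the paper's own argument: both expand an arbitrary element of $\sR$ in the ``monomial'' basis $(x_1-1)^{i_1}\cdots(x_n-1)^{i_n}$, split the index set into $L_1$ (all $i_j<p^{s_j}-N_j$) and $L_2$ (some $i_j\ge p^{s_j}-N_j$), note that multiplication by the generator kills precisely the $L_2$-part and acts injectively on the $L_1$-part, and conclude that the annihilator of $C$ is the span of the $L_2$-monomials, which is the ideal $\id{(x_1-1)^{p^{s_1}-N_1},\dots,(x_n-1)^{p^{s_n}-N_n}}$. Where you go beyond the paper is the passage between the Euclidean dual and the annihilator: the paper silently treats ``$f\cdot(x_1-1)^{N_1}\cdots(x_n-1)^{N_n}=0$'' as equivalent to ``$f\in C^{\perp}$'', whereas the correct general statement is $C^{\perp}=\rho(\mathrm{Ann}(C))$ with $\rho\colon x_j\mapsto x_j^{-1}$ the reversal automorphism (for one-variable cyclic codes this is exactly why the dual is generated by the \emph{reciprocal} of the check polynomial, not the check polynomial itself). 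Your observation that $\rho\bigl((x_j-1)^{p^{s_j}-N_j}\bigr)$ is a unit multiple of $(x_j-1)^{p^{s_j}-N_j}$, so the ideal is $\rho$-invariant and the reversal is harmless here, is precisely the step needed to make the identification legitimate; in that one respect your write-up is more complete than the paper's.
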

\begin{remark}
 Lemma \ref{Lemma.Dual.of.Monomial.like.Code} does not hold for arbitrary codeword lengths.
 That is, if
 \be
  \sR^{'} = \frac{\F_q[x_1, \dots, x_n]}{ \id{x_1^{A_1} -1, \dots , x_n^{A_n} - 1}  },\nn
 \ee
 then since $x_j^{A_j} - 1 = (x-1)^{A_j}$\ only when $A_j = p^{s_j}$\ for some
 $s_j$, the above arguments are not valid for arbitrary $A_j$.
\end{remark}

Via Lemma \ref{Preliminaries.Isomorphism}, Lemma \ref{Lemma.Dual.of.Monomial.like.Code} can be generalized to
constacyclic codes.

\begin{lemma}
  \label{Lemma.Dual.of.Monomial.like.Code.Constacyclic.Generalisation}
 Let
 \be
  D = \id{ (x_1 - c_1)^{N_1} \cdots (x_n - c_n)^{N_n} } \subset 
      \frac{\F_q[x_1, \dots , x_n]}{ \id{x_1^{p^{s_1}} - c_1^{p^{s_1}} , \cdots , x_n^{p^{s_n}} - c_n^{p^{s_n}}  } }. \nn
  \ee
  Then
  \be
    D^{\perp} = \id{ (x_1 - c_1)^{p^{s_1} - N_1}, \dots , (x_n - c_n)^{p^{s_n} - N_n} }.\nn
  \ee
\end{lemma}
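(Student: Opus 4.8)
The plan is to deduce the statement from the cyclic case (Lemma~\ref{Lemma.Dual.of.Monomial.like.Code}) by transporting it along the ring isomorphism of Lemma~\ref{Preliminaries.Isomorphism}. Concretely, I would apply that lemma with $k_j=p^{s_j}$ to obtain
$$\xi:\ \frac{\F_q[x_1,\dots,x_n]}{\id{x_1^{p^{s_1}}-1,\dots,x_n^{p^{s_n}}-1}}\ \longrightarrow\ \frac{\F_q[x_1,\dots,x_n]}{\id{x_1^{p^{s_1}}-c_1^{p^{s_1}},\dots,x_n^{p^{s_n}}-c_n^{p^{s_n}}}},\qquad f\mapsto f(c_1^{-1}x_1,\dots,c_n^{-1}x_n).$$
The identity $(c_j^{-1}x_j)^{p^{s_j}}-1=c_j^{-p^{s_j}}(x_j^{p^{s_j}}-c_j^{p^{s_j}})$ in the proof of Lemma~\ref{Preliminaries.Isomorphism} does not use that $p^{s_j}$ is odd, so the lemma applies verbatim. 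Since $\xi$ rescales the monomial $x_1^{i_1}\cdots x_n^{i_n}$ by the unit $c_1^{-i_1}\cdots c_n^{-i_n}$, it is diagonal in the monomial basis and hence a Hamming isometry. Because $\xi$ is a ring isomorphism it carries ideals to ideals: from $\xi\big((x_j-1)^{e}\big)=(c_j^{-1}x_j-1)^{e}=c_j^{-e}(x_j-c_j)^{e}$ we get $\xi(C)=D$ for $C=\id{(x_1-1)^{N_1}\cdots(x_n-1)^{N_n}}$, and feeding in the generators of $C^\perp=\id{(x_1-1)^{p^{s_1}-N_1},\dots,(x_n-1)^{p^{s_n}-N_n}}$ supplied by Lemma~\ref{Lemma.Dual.of.Monomial.like.Code} gives $\xi(C^\perp)=\id{(x_1-c_1)^{p^{s_1}-N_1},\dots,(x_n-c_n)^{p^{s_n}-N_n}}$, which is exactly the ideal the statement assigns to $D^\perp$.

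Everything therefore reduces to the single identity $\xi(C^\perp)=\big(\xi(C)\big)^\perp=D^\perp$, i.e.\ to the claim that $\xi$ carries Euclidean duals to Euclidean duals. To establish this I would make the orthogonality relation explicit on both sides. In the cyclic ring $\langle f,g\rangle$ equals the coefficient of $x^0$ in $f(x)\,g(x^{-1})$, so $C^\perp$ is the annihilator of the reciprocal $\widetilde g(x)=g(x^{-1})$ of the generator $g$; because $x_j^{-1}-1=-x_j^{-1}(x_j-1)$ is a unit multiple of $x_j-1$, the reciprocal $\widetilde g$ is again a unit multiple of $g$, and the annihilator computation of Lemma~\ref{Lemma.Dual.of.Monomial.like.Code} returns $\id{(x_j-1)^{p^{s_j}-N_j}}$. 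I would then set up the analogous constant-term formula in the constacyclic ambient and verify that $\xi$ intertwines the two pairings, so that $f\perp g\Longleftrightarrow\xi(f)\perp\xi(g)$. Granting this compatibility, $\xi$ maps $C^\perp$ bijectively onto $D^\perp$, and combining with the ideal computation of the first paragraph yields $D^\perp=\id{(x_1-c_1)^{p^{s_1}-N_1},\dots,(x_n-c_n)^{p^{s_n}-N_n}}$.

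The step I expect to be the crux is precisely this compatibility of $\xi$ with the Euclidean form. Since $\xi$ scales the $x_1^{i_1}\cdots x_n^{i_n}$-coordinate by $c_1^{-i_1}\cdots c_n^{-i_n}$, the form $\langle\xi(f),\xi(g)\rangle$ acquires factors $c_j^{-2i_j}$; equivalently, the reciprocal $x_j\mapsto x_j^{-1}$ underlying the constant-term formula turns $(x_j-c_j)^{N_j}$ into a unit multiple of $(x_j-c_j^{-1})^{N_j}$ and sends the $c_j^{p^{s_j}}$-constacyclic ambient to the $c_j^{-p^{s_j}}$-constacyclic one. Pinning down these unit factors so that the generator of $D^\perp$ emerges as a power of $(x_j-c_j)$, rather than of $(x_j-c_j^{-1})$, is the delicate point on which the whole reduction rests, and is exactly what I would treat with care before reading off the conclusion from Lemma~\ref{Lemma.Dual.of.Monomial.like.Code}. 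By contrast the multivariable bookkeeping is routine: both $\xi$ and $D$ factor variable by variable (Theorem~\ref{Theorem.Generalisation.Monomial.Codes.Are.Product.Codes}), so once the one-variable compatibility is secured the general case follows by treating each factor separately.
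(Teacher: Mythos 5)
You put your finger on exactly the right spot, but the difficulty you postponed (``pinning down these unit factors so that the generator of $D^\perp$ emerges as a power of $(x_j-c_j)$'') is not bookkeeping that care can resolve: it is fatal, because the compatibility $\xi(C^\perp)=\big(\xi(C)\big)^\perp$ on which your reduction rests fails whenever some $c_j^2\neq 1$. Since $\xi$ acts on coordinates as a diagonal matrix $\Delta$ with entries $c_1^{-i_1}\cdots c_n^{-i_n}$, one has $\langle\xi(f),\xi(g)\rangle=\langle\Delta^2 f,g\rangle$, hence $\big(\xi(C)\big)^\perp=\Delta^{-1}\big(C^\perp\big)$; and $\Delta^{-1}$ is the coordinate action of $f\mapsto f(c_1x_1,\dots,c_nx_n)$, which sends $(x_j-1)^k$ to $c_j^k(x_j-c_j^{-1})^k$ and the cyclic ambient to the $c_j^{-p^{s_j}}$-constacyclic one. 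So the transported dual is $\id{(x_1-c_1^{-1})^{p^{s_1}-N_1},\dots,(x_n-c_n^{-1})^{p^{s_n}-N_n}}$, not the ideal in the statement. Indeed the statement itself is false under the Euclidean reading: take $n=1$, $q=p=5$, $s=1$, $c=2$, $N=4$, so the ambient is $\F_5[x]/\id{x^5-2}$ and $D$ is spanned by $(x-2)^4=x^4+2x^3+4x^2+3x+1$, with coordinate vector $(1,3,4,2,1)$. The claimed generator $x-2$ of $D^\perp$ has coordinate vector $(3,1,0,0,0)$, and the inner product is $3\cdot 1+1\cdot 3=6\equiv 1\pmod 5$, so $x-2\notin D^\perp$. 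One checks instead that $D^\perp=\{f:\ f(3)=0\}$, the span of $x^i(x-3)$ for $0\le i\le 3$, which is $\id{x-2^{-1}}$ inside $\F_5[x]/\id{x^5-3}$ and is not even an ideal of the original ring (e.g.\ $x\cdot x^3(x-3)\equiv 2-3x^4 \pmod{x^5-2}$ does not vanish at $3$).

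In fairness, the paper offers no proof either: it asserts the lemma ``via Lemma \ref{Preliminaries.Isomorphism}'', which is precisely your route, so you reconstructed the intended argument and, unlike the paper, correctly isolated where it breaks. There are two ways to rescue the statement. First, the proof of Lemma \ref{Lemma.Dual.of.Monomial.like.Code} actually computes the annihilator $\{f:\ f\cdot d=0\ \ \forall d\in D\}$, which for the cyclic case coincides with the Euclidean dual because the reciprocal of $(x-1)^k$ is a unit multiple of itself; if $D^\perp$ is read as the annihilator, the constacyclic lemma follows at once from the first paragraph of your proposal, since a ring isomorphism carries annihilators to annihilators and no pairing compatibility is needed. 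Second, under the genuine Euclidean reading the lemma holds as stated exactly when $c_j^2=1$ for all $j$ (then $\Delta^2=I$ and your transport argument closes), which covers the cyclic and negacyclic cases the paper actually uses; otherwise the correct conclusion is $D^\perp=\id{(x_1-c_1^{-1})^{p^{s_1}-N_1},\dots,(x_n-c_n^{-1})^{p^{s_n}-N_n}}$ viewed in the $c_j^{-p^{s_j}}$-constacyclic ambient.
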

Now we construct an $\F_q$-basis for $C^{\perp}$.
This also gives us a generator matrix for $C^{\perp}$\ and hence a parity check matrix for $C$.

 We define
 \be
  T & = & \{ (a_1,\dots, a_n) \in \N^n: \quad p^{s_j} - N_j \le a_j < p^{s_j} \}\quad \mbox{and} \nn\\
  B & = & \{ (x_1-1)^{a_1}\cdots (x_n - 1)^{a_n}: \quad (a_1, \dots , a_n)\in T \}.\nn
 \ee
 Since the set
 $ B^{'} = \{ x_1^{a_1}\cdots x_n^{a_n} : \quad (a_1,\dots, a_n) \in T \} $
 is linearly independent, by the isomorphism given in Remark \ref{Remark.Isomorhpism.Basic.Change.of.Variable}, 
 we see that the set $B$\ is linearly independent. 
Let
\be
  T_j = \{ (a_1, \dots , a_n) \in \N^n:\quad p^{s_j}- N_j \le a_j < p^{s_j}  \}.\nn
\ee
then we can view $T$\ as
\be
  \label{Equation.T.Decomposition}
  T = T_1 \cup \cdots \cup T_n.
\ee
Let $p^s = p^{s_1}\cdots p^{s_n}$. Note that $|T_j| = N_j\frac{p^s}{p^{s_j}}$\ moreover,
\be
  |T_{e_1} \cap \dots \cap T_{e_r}| = N_{e_1}\cdots N_{e_r}\frac{p^s}{p^{s_{e_1}} \cdots p^{s_{e_r}} }.\nn
\ee
Now applying the inclusion-exclusion principle to (\ref{Equation.T.Decomposition}), we obtain
\be
  |T| & = & N_1^{p^s / p^{s_1}} + \cdots + N_n^{p^s / p^{s_n}} - N_1 N_2\frac{p^s}{p^{s_1}p^{s_2}}- \cdots 
	  - N_{n-1} N_n\frac{p^s}{p^{s_{n-1}}p^{s_n}} + \cdots + (-1)^nN_1\cdots N_n  \nn\\
	& = & p^{s_1}\cdots p^{sNn} - (p^{s_1} - N_1)\cdots (p^{s_n} -  N_n).\nn  
\ee
Clearly $|B| = |T| = p^{s_1}\cdots p^{s_n} - (p^{s_1} - N_1)\cdots (p^{s_n} -  N_n)$.
On the other hand, we know, from Theorem \ref{Theorem.Monomial.Codes.Are.Product.Codes}, that
$\dim (C) =  (p^{s_1} - N_1)\cdots (p^{s_n} -  N_n) $. 
This implies that
$\dim (C) = p^{s_1}\cdots p^{s_n} - \dim(C) = p^{s_1}\cdots p^{s_n} - (p^{s_1} - N_1)\cdots (p^{s_n} -  N_n)$.
Therefore the set $B$\ is an $\F_q$-basis for $C^{\perp}$.
Considering the vector representations of the elements of $B$, we obtain a generator matrix
for $C^{\perp}$\ and a parity check matrix for $C$.
In Section \ref{Section.Hasse.Der}, we present another method of finding a parity check matrix for $C$.

In particular, in 2 variable case, there are few enough cases to express $B$\ and $T$ more explicitly
in a feasible way.

We define 
 \be
  T^{(2)} & = & \{ (k,m): \quad p^{s_1 - i} \le k < p^{s_1} \quad \mbox{and} \quad p^{s_2 - j} \le m  < p^{s_2} \}  \nn \\
	&& \sqcup \{ (k,m): \quad 0 \le k < p^{s_1 - i} \quad \mbox{and} \quad p^{s_2 - j} \le m < p^{s_2} \}\nn \\
	&& \sqcup \{ (k,m): \quad p^{s_1 - i} \le k < p^{s_1} \quad \mbox{and} \quad o \le m < p^{s_2 - j}  \}.\nn
 \ee
 The set
 $$ B^{(2)} = \{ (x-1)^k(y-1)^m: \quad (k,m) \in T^{(2)} \} $$
 is linearly independent and $|T^{(2)}| = p^{s_1}p^{s_2} - (p^{s_1} - i)(p^{s_2} - j)$.
 Hence $B^{(2)}$ is an $\F_q$-basis for $C_2^{\perp}$.
 
\section{Weight hierarchy of some monomial-like codes}
\label{Section.Generalized.Ham.W.Prod.Codes}

Let
\be
  \sR^{'} = \frac{\F_q[x]}{\id{x^p -1}}\nn
\ee
and $C = \id{(x-1)^i} \subset \sR^{'}$.
It was shown in \cite[Theorem 5]{MASSEY_1973}
that $C$\ is an MDS code. The weight hierarchy of MDS codes are determined in \cite[Theorem 7.10.7]{huffman_2003}.
First we state the weight hierarchy of $C$\ and prove it for the sake completeness.
Next we study the weight hierarchy of the product of two monomial-like codes which are subsets of $\sR^{'}$.
For such codes, we simplify an expression that gives us the weight hierarchy of monomial-like codes of the form
\be
  \label{Equation.Definition.C.xy}
  C_{xy} = \id{(x-1)^i(y-1)^j} \subset \frac{\F_q[x,y]}{ \id{x^p-1, y^p-1} }.
\ee
We begin by giving the necessary definitions and facts.
The reader is referred to \cite[Section 7.10]{huffman_2003} or \cite{martinez_perez_2001} for the details.
The support of a codeword $c = (c_1, \dots, c_m)$\ is the set
\be
  \chi(c) = \{ i: \quad c_i \neq 0 \}. \nn
\ee
The support of a subset $S \subset \F_q^m$\ is the set
\be\nn
  \chi(S) = \bigcup_{c \in S} \chi(c).
\ee
If $D \subset \F_q^m$\ is a subspace of $C$, then we denote this by $D \leq C$.
The $r^{th}$\ minimum Hamming weight of a code $C$\ is defined as
\be
  d_r(C) = \min \{ \#\chi(D): D \leq C, \quad \dim (D) = r\}.\nn
\ee
The weight hierarchy of a k-dimensional code $C$\ is the sequence
\be
   (d_1(C),d_2(C), \dots , d_k(C)). \nn
\ee
An easy, yet important, observation is that $d_1(C) = d_H(C)$.
To see this, consider the 1-dim  subspace of $C$\ generated by a minimum
weight codeword.

Now we give a lower bound on the generalized Hamming weight of $C$.
\begin{lemma}
  \label{Gen.HW.Prod.Codes.Lemma.Chi.D.p.Case.Lower.Bound}
    Let $D \subset C = \id{(x-1)^i} \subset \frac{\F_q[x]}{\id{x^p - 1}}$\ be
    a k-dimensional subspace of $C$. Then
    \be
      \chi(D) > i + k - 1. \nn
    \ee
\end{lemma}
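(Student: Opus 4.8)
The plan is to prove the equivalent statement $\#\chi(D)\ge i+k$ by combining the two facts already in hand: that $C=\id{(x-1)^i}\subset\F_q[x]/\id{x^p-1}$ is an MDS code (by \cite[Theorem 5]{MASSEY_1973}) and that, consequently, $d_H(C)=d_1(C)=i+1$. Write $\kappa=\dim C=p-i$ for the dimension, and set $T=\chi(D)$, $s=\#\chi(D)$, together with its complement $S=\{1,\dots,p\}\setminus T$, so that $\#S=p-s$.

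First I would record that every codeword of $D$ vanishes on $S$ by the very definition of the support, so $D$ is contained in the kernel of the coordinate projection $\pi_S\colon C\to \F_q^{S}$. Hence $k=\dim D\le \dim\ker\pi_S$. The second step is to evaluate $\dim\ker\pi_S$ using the MDS property. Since $D\neq 0$ and every nonzero codeword of $C$ has weight at least $d_H(C)=i+1$, we get $s\ge i+1$, whence $\#S=p-s\le p-i-1<\kappa$. Because $C$ is MDS, every set of at most $\kappa$ coordinates extends to an information set, so $\pi_S$ is surjective; therefore $\dim\ker\pi_S=\dim C-\#S=(p-i)-(p-s)=s-i$. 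Combining with the inequality from the first step gives $k\le s-i$, that is $\#\chi(D)=s\ge i+k>i+k-1$, which is the assertion.

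The main obstacle is the surjectivity of $\pi_S$: this is precisely where the MDS hypothesis enters, through the standard fact that in an MDS code any $\kappa$ coordinates form an information set (equivalently, any $\le\kappa$ coordinate positions can be freely prescribed, and a projection onto a subset of such a set is again surjective). I would verify the size bound $\#S\le\kappa$ with some care, since it is exactly what guarantees that $S$ sits inside an information set; this bound in turn rests only on the already-established value $d_H(C)=i+1$, so no new input is required.

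An alternative route, should one prefer to avoid invoking the information-set characterization, is to argue by the general strict monotonicity $d_{r-1}(C)<d_r(C)$ of the generalized Hamming weights: for a fixed coordinate $j\in\chi(D)$, passing from $D$ to the subcode $\{c\in D:\ c_j=0\}$ drops the dimension by at most one while forcing the new support into $\chi(D)\setminus\{j\}$, hence strictly shrinking it. Iterating this from the base case $d_1(C)=d_H(C)=i+1$ yields $d_k(C)\ge i+k$, and in particular $\#\chi(D)\ge i+k>i+k-1$ for every $k$-dimensional $D\le C$.
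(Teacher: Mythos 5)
Your proposal is correct, but both of your arguments differ from the paper's own proof, which is a direct contradiction argument: the paper assumes $\#\chi(D)\le i+k-1$, restricts a basis of $D$ to those $i+k-1$ coordinates, and applies Gaussian elimination to produce a nonzero codeword with at least $k-1$ forced zeros, hence of weight at most $i$, contradicting $d_H(C)=i+1$. Your primary route is instead a dimension count: projecting $C$ onto the complement $S$ of $\chi(D)$, using $\#S\le p-i-1<\dim C$ together with the MDS information-set property to get surjectivity of $\pi_S$, and concluding $k\le\dim\ker\pi_S=\#\chi(D)-i$. This is sound (the size bound on $S$ does follow from $d_H(C)=i+1$ applied to any nonzero element of $D$, and surjectivity of a projection onto a subset of an information set is standard), but note it genuinely uses the MDS hypothesis, i.e.\ the dimension $p-i$ and not just the distance; in exchange it is cleaner and, combined with the paper's Corollary~\ref{Gen.HW.Prod.Codes.Corollary.Chi.D.p.Case.Exact.Value}, it essentially re-derives the exact MDS weight hierarchy $d_r=i+r$. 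Your alternative route --- peeling off one dimension at a time via the subcode $\{c\in D:\ c_j=0\}$ for $j\in\chi(D)$ --- is the standard proof of strict monotonicity of generalized Hamming weights; it needs only $d_1(C)=i+1$, exactly like the paper's Gaussian-elimination argument, and can be seen as an inductive repackaging of it: both show that a $k$-dimensional space supported on few coordinates must contain a vector of weight below the minimum distance. Either of your arguments would serve as a valid replacement for the paper's proof.
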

\begin{proof}
 Assume the converse. Let $ \sB = \{ \beta_1,\dots , \beta_k \}$\ be a basis for $D$.
 For
  \be
    \beta_1 = (\beta_{1,1}, \dots , \beta_{1,p}),\dots , \beta_k = (\beta_{k,1}, \dots , \beta_{k,p}),\nn
   \ee
 let $e_1, \dots , e_{i+k-1}$ be the coordinates  where $\beta_1, \dots , \beta_k$\ are possibly nonzero.
 In other words, the generators $\beta_1, \dots , \beta_k$\ (hence all the elements of $D$)\ are zero
 at the coordinates $\{1,2, \dots , p\}\setminus \{ e_1, \dots , e_{i+k-1} \}$.
 For $ 1 \le \ell \le k $, define
 \be
  \beta_{\ell}^{'} = (\beta_{\ell,e_1}, \dots , \beta_{\ell,e_{i + k - 1}}).\nn
 \ee
 Since $\beta_1, \dots , \beta_k$\ are linearly independent, the vectors
 $\beta_1^{'}, \dots , \beta_k^{'}$\ are also linearly independent.
 Then, after some rearrangement if necessary, applying Gaussian elimination, we can
 put these vectors in such a form, say $\alpha_1, \dots , \alpha_k$, that
 each $\alpha_{\ell}$\ has at least $N+(\ell-1)$\ leading zeroes where $N \ge 0$.
 Thus, the vector $\alpha_k$\ has at least $k-1$\ leading zeroes.
 So $w_H(\alpha_k) \le i + k -1 - (k-1) = i$.
 This implies that there is a codeword $\hat{\alpha}_k$,
 which is obtained after putting back the stripped off zeroes, with $d_H(\hat{\alpha}_k) < i+1$.
 This is a contradiction because $d_H(C) = i+1$. Hence $\chi (D) > i + k -1$.
\end{proof}

Using the above lower bound, we determine the generalized Hamming weight of $C$.
We would like to note that the next corollary is an immediate consequence of 
\cite[Theorem 7.10.7]{huffman_2003}.
\begin{corollary}
 \label{Gen.HW.Prod.Codes.Corollary.Chi.D.p.Case.Exact.Value}
 Let $C = \id{(x-1)^i} \subset \frac{\F_q[x]}{\id{x^p - 1}}$. Then
 \be
  d_r(C) = i + r.\nn
 \ee
\end{corollary}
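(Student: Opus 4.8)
The plan is to squeeze $d_r(C)$ between matching bounds, using Lemma \ref{Gen.HW.Prod.Codes.Lemma.Chi.D.p.Case.Lower.Bound} for one direction and an explicit construction for the other. The lower bound is immediate: for any $r$-dimensional $D \leq C$ that lemma gives $\#\chi(D) > i + r - 1$, and since $\#\chi(D)$ is an integer this forces $\#\chi(D) \geq i + r$; minimizing over all such $D$ yields $d_r(C) \geq i + r$.

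For the reverse inequality I would produce a single $r$-dimensional subspace whose support has size exactly $i + r$. Write $g(x) = (x-1)^i$; since $i < p$, Lucas' theorem makes every coefficient $\binom{i}{t}$ with $0 \leq t \leq i$ nonzero, so $g$ has support exactly $\{0, 1, \dots, i\}$. I would then take $D = \Span\{ x^\ell g(x) : 0 \leq \ell \leq r-1 \}$. The key numerical input is $r \leq \dim(C) = p - i$, hence $i + r \leq p$, which prevents any of the shifts from wrapping around modulo $x^p - 1$; consequently $x^\ell g(x)$ occupies precisely the coordinate block $\{\ell, \ell+1, \dots, \ell + i\}$.

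Two routine verifications then close the corollary. First, $\dim(D) = r$: the lowest nonzero coordinate of $x^\ell g(x)$ is $\ell$, with coefficient $g_0 = (-1)^i \neq 0$, so the $r \times r$ submatrix of the rows $g, xg, \dots, x^{r-1}g$ on columns $0, \dots, r-1$ is triangular with nonzero diagonal, and the shifts are therefore linearly independent. Second, $\chi(D) = \bigcup_{\ell=0}^{r-1} \{\ell, \dots, \ell + i\} = \{0, 1, \dots, i + r - 1\}$, so $\#\chi(D) = i + r$ and hence $d_r(C) \leq i + r$. Combining the two bounds gives $d_r(C) = i + r$.

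I expect the only genuine subtlety to be keeping track of the reduction modulo $x^p - 1$: one must use the dimension constraint $r \leq p - i$ to guarantee $i + r \leq p$, so that the top shift $x^{r-1}g$ reaches coordinate $i + r - 1 \leq p - 1$ and no support wraps around. Given this, the $r$ overlapping intervals $\{\ell, \dots, \ell + i\}$ have union the contiguous block $\{0, \dots, i + r - 1\}$ of size exactly $i + r$. Everything else---the non-vanishing of the binomial coefficients and the triangular independence argument---is mechanical. Alternatively, one could invoke the generalized Singleton bound together with the MDS property of $C$ from \cite{MASSEY_1973} to get $d_r(C) = i + r$ directly, but the explicit construction keeps the argument self-contained.
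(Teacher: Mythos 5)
Your proposal is correct and follows essentially the same route as the paper: the lower bound $d_r(C) \geq i+r$ from Lemma \ref{Gen.HW.Prod.Codes.Lemma.Chi.D.p.Case.Lower.Bound} (plus integrality), matched by the explicit subspace spanned by $(x-1)^i, x(x-1)^i, \dots, x^{r-1}(x-1)^i$. The details you supply beyond the paper's ``obviously'' and ``it is not hard to see''---the triangular independence argument, and the observation that $r \leq p-i$ forces $i+r \leq p$ so no shift wraps around modulo $x^p-1$---are exactly the right verifications and make the argument airtight.
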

\begin{proof}
 Since
 \be
  d_r(C) = \min \{ \# \chi(D): \quad D \le C, \quad \dim (D) = r \}, \nn
 \ee
 by Lemma \ref{Gen.HW.Prod.Codes.Lemma.Chi.D.p.Case.Lower.Bound},
 it suffices to show that there exists an $r$\ dimensional subspace
 $D$, of $C$\ such that $\chi (D) = i + r$.
 Consider the subspace $T = \id {(x-1)^i, x(x-1)^i, \dots , x^{r-1}(x-1)^i}$.
 Obviously, the generators are linearly independent and $\dim (T) = r$.
 It is not hard to see that $\chi(D) = i+1 + (r-1) = i+r$.
 This completes the proof.
\end{proof}

\begin{remark}
  \label{Gen.HW.Prod.Codes.Remark.Weight.Hier.p.Case}
  Corollary \ref{Gen.HW.Prod.Codes.Corollary.Chi.D.p.Case.Exact.Value} gives us the weight hierarchy of
  all cyclic codes of length $p$\ over a finite field of characteristic $p$.
  With the notation in Corollary \ref{Gen.HW.Prod.Codes.Corollary.Chi.D.p.Case.Exact.Value},
  the weight hierarchy of $C$\ is
  \be
    \label{Equation.Weight.Hierarchy.of.C}
    (d_1(C), d_2(C),\dots , d_{p-i}(C) ) = ( i+1, i+2, \dots, p ).
  \ee
\end{remark}

Using the weight hierarchy of $C$\ (\ref{Equation.Weight.Hierarchy.of.C}),
we study the weight hierarchy of codes that are product of cyclic codes 
of length $p$\ over $\F_q$.

A $(k_1,k_2)$-partition of an integer  $r$\ is a non-increasing sequence
$\pi = (t_1, \dots, t_{k_1})$\ such that $t_1 + \cdots + t_{k_1} = r$\ and
$t_i \le k_2$\ for all $1 \le i \le k_1$. We denote all the $(k_1,k_2)$\
partitions of $r$\ by $P(k_1,k_2,r)$.

Let $D_1,D_2$\ be $[n_1,k_1,d_1],[n_2,k_2,d_2]$\ linear codes, respectively.
Let
\be
  \bigtriangledown (\pi) =  \sum_{i=1}^{k_1}(d_i(D_1) - d_{i-1}(D_1))d_{t_i}(D_2),\quad \pi \in P(k_1,k_2,r),
    \label{Gen.HW.Prod.Codes.Equation.Equation.Delta.Pi}\\
  d_r^*(D_1 \otimes D_2) = \min \{ \bigtriangledown (\pi): \quad \pi \in P(k_1,k_2,r) \}.\nn
\ee 
From \cite[Theorem 1]{schaathun_2000} , 
we know that $d_r^*(D_1 \otimes D_2) = d_r(D_1 \otimes D_2)$.

Now, for $C_1 = \id{(x-1)^{i_1}} \subset \frac{\F_q[x]}{\id{x^p - 1}}$\ and
 $C_2 = \id{(y-1)^{i_2}} \subset \frac{\F_q[y]}{\id{y^p - 1}}$,
using (\ref{Equation.Weight.Hierarchy.of.C}),
the expression (\ref{Gen.HW.Prod.Codes.Equation.Equation.Delta.Pi}) simplifies to
\be
  \bigtriangledown (\pi) & = & (i_1 + 1)d_{t_1}(C_2) + \sum_{i = 2}^{k_2}d_{t_i}(C_2)\nn\\
            & = & (i_1 + 1)(i_2 + t_1) + \sum_{i = 2}^{k_2}(i_2 + t_i).
            \label{Gen.HW.Prod.Codes.Equation.Delta.Pi.Simplified.p.Case}
\ee

In the following lemmas, we consider the cyclic codes $C_1, C_2$\ which are as introduced above
with the same notation.

\begin{lemma}
 \label{Gen.HW.Prod.Codes.Lemma.Minimum.Delta.Partition}
 Let $\pi_0 = (m,t_2, \dots , t_k)$\ be a $(k_1,k_2)$-partition
 of $r$\ such that $\bigtriangledown (\pi_0)$\ is minimum among all
 $\bigtriangledown (\hat{\pi})$\ where $\hat{\pi} \in P(k_1,k_2,r)$.
 Then, for the $(k_1,k_2)$-partition $\pi = (m,m,\dots , m,u,0, \dots , 0)$\ of
 $r$, where $0 < u \le m$, we have
 \be
  \bigtriangledown (\pi_0) = \bigtriangledown (\pi). \nn
 \ee
\end{lemma}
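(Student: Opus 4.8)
The plan is to obtain a closed form for $\bigtriangledown(\pi)$ that depends on the partition $\pi = (t_1, \dots , t_{k_1})$ only through its largest part $t_1$ and the number $\ell$ of its nonzero parts. Once this is in hand, the lemma reduces to a one-line monotonicity argument.

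First I would simplify the definition (\ref{Gen.HW.Prod.Codes.Equation.Equation.Delta.Pi}) of $\bigtriangledown$ using the weight hierarchy. By Corollary \ref{Gen.HW.Prod.Codes.Corollary.Chi.D.p.Case.Exact.Value} we have $d_i(C_1) = i_1 + i$, so with the convention $d_0 = 0$ the consecutive differences telescope: $d_1(C_1) - d_0(C_1) = i_1 + 1$, while $d_i(C_1) - d_{i-1}(C_1) = 1$ for $2 \le i \le k_1$. Writing $\ell$ for the number of indices with $t_i \ge 1$ and using $d_{t_i}(C_2) = i_2 + t_i$ when $t_i \ge 1$ together with $d_0(C_2) = 0$, the sum in (\ref{Gen.HW.Prod.Codes.Equation.Delta.Pi.Simplified.p.Case}) collapses to
\[ \bigtriangledown(\pi) = (i_1+1)(i_2 + t_1) + (\ell-1)i_2 + \sum_{i=2}^{\ell} t_i. \]
Substituting $\sum_{i=2}^{\ell} t_i = r - t_1$ (the parts sum to $r$) and collecting terms yields the key identity
\[ \bigtriangledown(\pi) = i_2(i_1 + \ell) + i_1 t_1 + r, \]
which exhibits $\bigtriangledown(\pi)$ as a function of $t_1$ and $\ell$ alone.

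With this identity the rest is short. For any partition of $r$ whose largest part equals $m$, the number of nonzero parts satisfies $\ell \ge \lceil r/m \rceil$, with equality exactly for the staircase $\pi = (m, \dots , m, u, 0, \dots , 0)$, $0 < u \le m$, since this realizes the fewest nonzero parts summing to $r$ with each part at most $m$. Because $i_2 \ge 0$, the identity shows $\bigtriangledown$ is non-decreasing in $\ell$ once $t_1 = m$ is fixed. The minimizer $\pi_0 = (m, t_2, \dots)$ and the staircase $\pi$ share the same largest part $m$, and $\pi$ attains the minimal value of $\ell$, so $\bigtriangledown(\pi) \le \bigtriangledown(\pi_0)$; minimality of $\pi_0$ among all of $P(k_1,k_2,r)$ gives the reverse inequality, hence equality. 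I would also verify that $\pi \in P(k_1,k_2,r)$: its largest part is $m \le k_2$ (as for $\pi_0$), and its number of nonzero parts $\lceil r/m\rceil$ does not exceed that of $\pi_0$, so it is at most $k_1$.

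The main obstacle is purely the bookkeeping in the first step, especially keeping the convention $d_0(C_2) = 0$ separate from the formula $d_{t_i}(C_2) = i_2 + t_i$ (valid only for $t_i \ge 1$) and correctly counting the nonzero parts when rewriting the telescoped sum. Once the closed form $\bigtriangledown(\pi) = i_2(i_1 + \ell) + i_1 t_1 + r$ is secured, the conclusion follows with no further case analysis.
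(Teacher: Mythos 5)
Your proof is correct, and it takes a genuinely cleaner route than the paper's. The paper argues by a three-way case analysis comparing $\pi_0$ entry by entry with the staircase $\pi=(m,\dots,m,u,0,\dots,0)$: either they coincide; or they agree on the first $e-1$ entries, in which case minimality of $\bigtriangledown(\pi_0)$ is used to force the tail of $\pi_0$ to collapse so that $\pi_0=\pi$ (a step that tacitly needs $i_2>0$); or $\pi_0$ has fewer leading $m$'s than $\pi$ (the paper's condition ``$a_\alpha<t_\alpha=m$'' is evidently a typo for $t_\alpha<a_\alpha=m$), in which case the difference $\bigtriangledown(\pi)-\bigtriangledown(\pi_0)$ equals $(e-N)i_2$, which is $\ge 0$ by minimality and $\le 0$ since $N\ge e$. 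Each of these cases secretly computes the single fact that you isolate globally: with the convention $d_0(C_2)=0$, one has $\bigtriangledown(\pi)=i_2(i_1+\ell)+i_1t_1+r$, so for fixed largest part $t_1=m$ the value is a non-decreasing function of the number $\ell$ of nonzero parts. Your identity eliminates the case analysis, covers the degenerate case $i_2=0$ uniformly (there the paper's second case cannot conclude $\pi_0=\pi$, though the lemma's conclusion still holds), and has the further payoff that the paper's subsequent Lemma \ref{Gen.HW.Prod.Codes.Lemma.Simplified.dr} becomes nearly immediate, since $d_r(C_1\otimes C_2)$ is exhibited as a minimum of $i_2\left(i_1+\lceil r/m\rceil\right)+i_1m+r$ over admissible largest parts $m$. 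Two caveats: you rightly insist that $d_{t_i}(C_2)=i_2+t_i$ only for $t_i\ge 1$ --- note the paper's own display (\ref{Gen.HW.Prod.Codes.Equation.Delta.Pi.Simplified.p.Case}) glosses over this point (and its upper summation limit $k_2$ should read $k_1$) --- and your parenthetical claim that $\ell=\lceil r/m\rceil$ holds \emph{exactly} for the staircase is not literally true (for $r=7$, $m=3$, both $(3,3,1)$ and $(3,2,2)$ have three nonzero parts), but this is harmless since your argument only uses that the staircase attains the bound.
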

\begin{proof}
 Say $\pi = (a_1, \dots , a_{e-1}, a_e, a_{e+1}, \dots , a_{k_1})$, where
 $a_1 = \cdots = a_{e-1} = m$, $a_e = u$\ and $a_{e+1} = \cdots = a_{k_1} = 0$.
 If $\pi = \pi_0$, then we are done.
 If $a_i = t_i$\ for all $1 \le i \le e-1$\ and $a_e = u \ge t_e$, then since
 $\sum_{i=1}^{k_1}a_i = \sum_{i=1}^{k_i}t_i$, we get
 $u = t_e + t_{e+1} + \cdots + t_{e + \ell}$\ for some $\ell \ge 0$.
 Now, by (\ref{Gen.HW.Prod.Codes.Equation.Delta.Pi.Simplified.p.Case}), we get
 \be
  \bigtriangledown (\pi) & = & (i_1 + 1)(i_2 + m) + (e-2)(i_2 + m) + (i_2 + u), \quad \mbox{and}\nn\\
  \bigtriangledown (\pi_0) & = & (i_1 + 1)(i_2 + m) + (e-2)(i_2 + m) + \sum_{j=0}^{\ell}(i_2 + t_{e + j}). \nn
 \ee
 So, by the minimality of $\bigtriangledown (\pi_0)$, we get
 $\bigtriangledown (\pi) - \bigtriangledown (\pi_0) = i_2 + u - \sum_{j=0}^{\ell}(i_2 + t_{e + j}) \ge 0$.
 This implies $\ell = 0 $\ and $t_e = u$, $t_{e+1} = \cdots = t_{k_1} = 0$.
 Hence, in this case, $\pi = \pi_0$.
 If $a_{\alpha} < t_{\alpha} = m $\ for some $1 < \alpha \le e-1$, then,
 since $\pi_0$\ is a  non-increasing sequence, $\pi_0$\ is of the form
 \be
  \pi_0 = (m,m,\dots , m , t_{\alpha},t_{\alpha + 1}, \dots , t_N, 0 , \dots , 0) \nn
 \ee
 for some $\alpha + 1 \le N \le k_1$, where $t_j < m$\ for all $j \ge \alpha$.
 This implies that $N \ge e$. So
 \be
  \bigtriangledown (\pi_0) = (i_1 + 1)(i_2 + m) + (\alpha -2)(i_2 + m) + \sum_{j = \alpha}^{N}(i_2 + t_i). \nn
 \ee
 On the other hand,
 \be
  \bigtriangledown (\pi) = (i_1 + 1)(i_2 + m) + (\alpha -2)(i_2 + m) + \sum_{j = \alpha}^{e}(i_2 + a_i). \nn
 \ee
 Since $\sum_{j = \alpha}^N t_i = \sum_{j = \alpha}^e a_i$, we get
 \be
  \bigtriangledown (\pi) - \bigtriangledown (\pi_0) = (e - \alpha + 1)i_2 - (N - \alpha + 1)i_2 \ge 0, \nn
 \ee
 by the minimality of $\bigtriangledown (\pi_0)$. By the fact that $N \ge e$, we obtain
 \be
  \bigtriangledown (\pi) - \bigtriangledown (\pi_0) = (e - \alpha )i_2 - (N - \alpha)i_2 \le 0. \nn
 \ee
 Thus $\bigtriangledown (\pi) = \bigtriangledown (\pi_0)$.
\end{proof}

\begin{lemma}
 \label{Gen.HW.Prod.Codes.Lemma.Simplified.dr}
 Let $r$\ be an integer such that $\alpha k_1 < r \le (\alpha + 1)k_1$. Let
 \be
  S = \{ (\beta, \beta , \dots , \beta, u_{\beta}, 0 , \dots , 0) \in P(k_1,k_2,r): \quad \alpha + 1 \le \beta \le \min \{k_2,r\} \}.\nn
 \ee
 We  have
 \be
  d_r(C_1 \otimes C_2) = d_r^{*}(C_1 \otimes C_2) = d_r(C_1 \otimes C_2) = \min \{ \bigtriangledown(\pi): \quad \pi \in S \}.\nn
 \ee
\end{lemma}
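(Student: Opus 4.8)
The plan is to combine three ingredients: the identity $d_r(C_1\otimes C_2)=d_r^{*}(C_1\otimes C_2)$ coming from \cite{schaathun_2000}, the reduction of an arbitrary minimizing partition to a ``staircase'' partition provided by Lemma \ref{Gen.HW.Prod.Codes.Lemma.Minimum.Delta.Partition}, and an elementary counting argument that pins down the admissible range of the top value of such a partition. Since by definition $d_r^{*}(C_1\otimes C_2)=\min\{\bigtriangledown(\pi):\pi\in P(k_1,k_2,r)\}$ and this equals $d_r(C_1\otimes C_2)$, it suffices to prove that the minimum of $\bigtriangledown$ over all of $P(k_1,k_2,r)$ is already attained on the smaller set $S$. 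As $P(k_1,k_2,r)$ is finite, all minima are attained, and since $S\subseteq P(k_1,k_2,r)$ we trivially get $\min_{\pi\in S}\bigtriangledown(\pi)\ge\min_{\pi\in P(k_1,k_2,r)}\bigtriangledown(\pi)$; the real content is the reverse inequality.

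First I would pick a partition $\pi_0$ at which $\bigtriangledown$ attains its minimum over $P(k_1,k_2,r)$. By Lemma \ref{Gen.HW.Prod.Codes.Lemma.Minimum.Delta.Partition} there is a staircase partition $\pi=(m,m,\dots,m,u,0,\dots,0)$, with $0<u\le m$ and $m$ the largest part of $\pi_0$, satisfying $\bigtriangledown(\pi)=\bigtriangledown(\pi_0)$. Thus the overall minimum is realized by such a $\pi$, and it remains only to check that $\pi$ actually lies in $S$, that is, that its top value $m$ satisfies $\alpha+1\le m\le\min\{k_2,r\}$.

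The heart of the argument is bounding $m$. The two upper bounds are immediate from the partition axioms: every part of a $(k_1,k_2)$-partition is at most $k_2$, so $m\le k_2$, and a single part cannot exceed the total, so $m\le r$; hence $m\le\min\{k_2,r\}$. For the lower bound I would use the hypothesis $\alpha k_1<r$: if one had $m\le\alpha$, then all $k_1$ parts of $\pi$ would be at most $\alpha$, forcing $r=\sum_i t_i\le\alpha k_1<r$, a contradiction. Since $m$ is the largest part, $m\ge\alpha+1$. Therefore $\pi\in S$, so $\min_{\pi\in P(k_1,k_2,r)}\bigtriangledown(\pi)=\bigtriangledown(\pi)\ge\min_{\pi\in S}\bigtriangledown(\pi)$, which together with the trivial inclusion gives equality. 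Invoking $d_r=d_r^{*}$ then finishes the proof.

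The step I expect to require the most care is not the bound on $m$ itself but confirming that the staircase sequences in the stated range are genuinely valid $(k_1,k_2)$-partitions, so that $S$ really is a subset of $P(k_1,k_2,r)$: writing $r=c\beta+u$ with $0<u\le\beta$, the number of nonzero parts is $c+1=\lceil r/\beta\rceil$, and the inequalities $r\le(\alpha+1)k_1$ and $\beta\ge\alpha+1$ give $r/\beta\le k_1$, whence $c+1\le k_1$ and the partition fits into the $k_1$ available slots. One must also keep in mind, when applying (\ref{Gen.HW.Prod.Codes.Equation.Delta.Pi.Simplified.p.Case}), that a zero part contributes $d_0(C_2)=0$ rather than $i_2$, exactly as was used in the computation inside Lemma \ref{Gen.HW.Prod.Codes.Lemma.Minimum.Delta.Partition}.
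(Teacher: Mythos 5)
Your proposal is correct and follows precisely the route the paper intends: the paper states this lemma without an explicit proof, as an immediate consequence of Lemma \ref{Gen.HW.Prod.Codes.Lemma.Minimum.Delta.Partition} together with the identity $d_r(C_1\otimes C_2)=d_r^{*}(C_1\otimes C_2)$ from \cite{schaathun_2000}, and your writeup supplies exactly that argument, with the correct bounds $\alpha+1\le m\le \min\{k_2,r\}$ on the top part of the minimizing staircase partition. Your two additional checks --- that the staircase partitions in the stated range genuinely fit into $k_1$ slots, and that zero parts must be read as contributing $d_0(C_2)=0$ in (\ref{Gen.HW.Prod.Codes.Equation.Delta.Pi.Simplified.p.Case}) --- are careful points the paper leaves implicit.
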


Lemma \ref{Gen.HW.Prod.Codes.Lemma.Simplified.dr} simplifies the computation of $d_r^{*}(C_1 \otimes C_2)$
significantly. The search set, for the minimum of $\bigtriangledown(\pi)$, reduces from the set of all
$(k_1,k_2)$ partitions of $r$\ to the set of $(k_1,k_2)$\ partitions of $r$\ that are of the form
$(\beta, \beta , \dots , \beta, u_{\beta}, 0 , \dots , 0)$.

Let $C_{xy}$\ be as in (\ref{Equation.Definition.C.xy}).
We know that $C_{xy} = C_1 \otimes C_2$\ by Theorem \ref{Theorem.Monomial.Codes.Are.Product.Codes}.
Therefore, the above simplification also applies to the generalized Hamming weight
of the monomial-like code $C_{xy}$. More explicitly, we have shown that 
\be\nn
  d_r(C_{xy}) =  \min \{ \bigtriangledown(\pi): \quad \pi = (\beta, \beta , \dots , \beta, u_{\beta}, 0 , \dots , 0),
    \quad \pi \in P(k_1,k_2,r) \}.
\ee

\section{Construction of parity check matrix and the Hasse derivative}
\label{Section.Hasse.Der}

We begin by recalling the Hasse derivative which is used in the repeated-root factor test.
For a detailed treatment of the Hasse derivative, we refer to
\cite[Chapter 1]{Goldschmidt_book} and \cite[Chapter 5]{Torres_Algeb_Curves}.

The standard derivative for polynomials over a field of positive characteristic, say $p$,
is inappropriate because from the $p^{th}$ derivative on, the result is always zero.
For this reason, it is more convenient to work with the Hasse derivative.
Sometimes the Hasse derivative is called as the hyper derivative.

Throughout this section, we will use the convention that ${a \choose b} = 0$\ whenever
$b < a$.
Let $\gm = \sum d_{i_1,\dots , i_n}x_1^{i_1}\cdots x_n^{i_n}\in \F_q[x_1, \dots , x_n]$.  The classical derivative of $\gm$\
in the direction $(a_1, \dots, a_n)$\ is defined as
\be
  \frac{\partial^{(a_1+ \dots + a_n)}}{\partial x_1^{a_1} \dots \partial x_n^{a_n}}\gm
    = \sum d_{i_1,\dots , i_n}a_1!\cdots a_n!{i_1 \choose a_1}\cdots {i_n \choose a_n} x_1^{i_1-a_1}\cdots x_n^{i_n-a_n}.\nn
\ee
The Hasse derivative of $\gm$\ in the direction  $(a_1, \dots, a_n)$\ is defined as
\be
  D^{[a_1, \dots , a_n]}(\gm) = \sum d_{i_1,\dots , i_n}{i_1 \choose a_1}\cdots {i_n \choose a_n} x_1^{i_1-a_1}\cdots x_n^{i_n-a_n}.\nn
\ee
We denote the evaluation of $D^{[a_1, \dots , a_n]}(\gm)$\ at the point $(\alpha_1, \dots , \alpha_n)$\ by
$D^{[a_1, \dots , a_n]}(g)(\alpha_1, \dots , \alpha_n)$.
We can  express $\gm$\ as
\be
   \gm = \sum_{(i_1,\dots , i_n) \in S} c_{i_1, \dots , i_n}(x_1 - 1)^{i_1}\cdots (x_n - 1)^{i_n} \nn
\ee
where $S$\ is a finite nonempty subset of $\N^n$. Let
\be
  U_{\ell} & = & \{ (i_1, \dots , i_n) \in S: \quad i_{\ell} \ge m_{\ell} \},\nn\\
  P_{\ell} & = & \{ (i_1, \dots , i_n) \in S: \quad i_{\ell} < m_{\ell} \}.\nn
\ee
Obviously $S = U_{\ell} \sqcup P_{\ell}$. So
\be
  \gm & = & \sum_{(i_1,\dots , i_n) \in U_{\ell}} c_{i_1, \dots , i_n}(x_1 - 1)^{i_1}\cdots (x_n - 1)^{i_n}\nn\\ 
	  &&+ \sum_{(i_1,\dots , i_n) \in P_{\ell}} c_{i_1, \dots , i_n}(x_1 - 1)^{i_1}\cdots (x_n - 1)^{i_n}.\nn
\ee
The term $(x_{\ell} - 1)^{m_{\ell}}$\ divides $\gm$\ if and only if 
$c_{i_1, \dots, i_n} = 0$\ for all $(i_1,\dots, i_n)\in P_{\ell}$.
Now suppose that  $(x_{\ell} - 1)^{m_{\ell}} \nmid \gm$.
Then there is $(\hat{i}_1, \dots , \hat{i}_n) \in P_{\ell}$\ such that
$c_{\hat{i}_1, \dots , \hat{i}_n} \neq 0$. So
\be
  D^{[\hat{i}_1, \dots , \hat{i}_n]}(g)(1,\dots, 1) = 
    c_{\hat{i}_1, \dots , \hat{i}_n} {\hat{i}_1 \choose \hat{i}_1} \cdots {\hat{i}_n \choose \hat{i}_n} \neq 0.\nn
\ee
Conversely, if $(x_{\ell} - 1)^{m_{\ell}}$\ divides \gm , then
\be
  \gm & = & \sum_{(i_1,\dots , i_n) \in U_{\ell}} c_{i_1, \dots , i_n}(x_1 - 1)^{i_1}\cdots (x_n - 1)^{i_n}.\nn
\ee
So $D^{[\vec{a}]}(g)(1, \dots, 1) = 0$\ for all $\vec{a} = (a_1, \dots, a_n)$\ with $0 \le a_{\ell} < m_{\ell}$.
This proves the following.
\begin{lemma}
 Let $\gm \in \F_q[x_1, \dots, x_n]$ and let 
 $A_{\ell} = \{\vec{a} = (a_1, \dots, a_n)\in \N^n: \quad  0 \le a_{\ell} < m_{\ell} \}$.
 Then $(x_{\ell} -1)^{m_{\ell}}$\ divides $\gm$\ if and only if
 $D^{[\vec{a}]}(g)(1, \dots, 1) = 0$\ for all $\vec{a} \in A_{\ell}$.
\end{lemma}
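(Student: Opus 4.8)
The plan is to reduce the whole statement to one clean identity: for every $\vec a=(a_1,\dots,a_n)\in\N^n$, the evaluated Hasse derivative $D^{[\vec a]}(g)(1,\dots,1)$ equals precisely the coefficient $c_{\vec a}$ in the expansion $g = \sum_{i}c_i(x_1-1)^{i_1}\cdots(x_n-1)^{i_n}$. Granting this, the lemma is immediate: $(x_\ell-1)^{m_\ell}$ divides $g$ exactly when every $c_i$ with $i_\ell<m_\ell$ vanishes, and the multi-indices $\vec a$ with $a_\ell<m_\ell$ are by definition the elements of $A_\ell$; so divisibility is equivalent to $D^{[\vec a]}(g)(1,\dots,1)=0$ for all $\vec a\in A_\ell$.

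To establish the identity I would first record the translation property of the one-variable Hasse derivative, $D^{[a]}\bigl((x-1)^i\bigr)=\binom{i}{a}(x-1)^{i-a}$. This follows from the standard characterization $h(x+t)=\sum_{a\ge 0}D^{[a]}(h)(x)\,t^a$: applying the binomial theorem to $h(x)=(x-1)^i$ gives $(x+t-1)^i=\sum_{a}\binom{i}{a}(x-1)^{i-a}t^a$, and reading off the coefficient of $t^a$ yields the claim. Since the multivariate Hasse derivative factors over the variables, $D^{[\vec a]}\bigl(\prod_j(x_j-1)^{i_j}\bigr)=\prod_j\binom{i_j}{a_j}(x_j-1)^{i_j-a_j}$.

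Evaluating this at $(1,\dots,1)$ is the decisive step. Each factor becomes $\binom{i_j}{a_j}\,0^{i_j-a_j}$, which equals $1$ when $i_j=a_j$, equals $0$ when $i_j>a_j$ (since $0^{i_j-a_j}=0$), and equals $0$ when $i_j<a_j$ (because then $\binom{i_j}{a_j}=0$). Hence the only term of $g$ surviving the evaluation is the one with $\vec i=\vec a$, contributing $c_{\vec a}\prod_j\binom{a_j}{a_j}=c_{\vec a}$; by linearity of $D^{[\vec a]}$ we obtain $D^{[\vec a]}(g)(1,\dots,1)=c_{\vec a}$, which is the identity sought.

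For the divisibility bookkeeping I would pass to the variables $y_j=x_j-1$, under which $g$ reads $\sum_i c_i\,y_1^{i_1}\cdots y_n^{i_n}$ and divisibility by $(x_\ell-1)^{m_\ell}$ becomes divisibility by $y_\ell^{m_\ell}$. As the monomials $y_1^{i_1}\cdots y_n^{i_n}$ form a basis, $y_\ell^{m_\ell}\mid g$ iff every monomial occurring in $g$ has $y_\ell$-degree at least $m_\ell$, that is, iff $c_i=0$ whenever $i_\ell<m_\ell$; combined with the identity above this is exactly the asserted criterion. The only place demanding any care is the uniqueness of the $(x_j-1)$-expansion, which simultaneously makes the coefficients $c_{\vec a}$ well defined and legitimizes the divisibility characterization; the genuine content is entirely concentrated in the translation property of the Hasse derivative, and everything after it is routine.
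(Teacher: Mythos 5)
Your proposal is correct and takes essentially the same route as the paper: both expand $g$ in the basis $(x_1-1)^{i_1}\cdots(x_n-1)^{i_n}$, characterize divisibility by $(x_\ell-1)^{m_\ell}$ as the vanishing of all coefficients $c_i$ with $i_\ell<m_\ell$, and use the fact that evaluating $D^{[\vec a]}$ at $(1,\dots,1)$ extracts exactly the coefficient $c_{\vec a}$. The only difference is presentational: you state and prove the extraction identity $D^{[\vec a]}(g)(1,\dots,1)=c_{\vec a}$ explicitly via the translation property of the Hasse derivative, whereas the paper uses the same computation implicitly in its case split over the index sets $U_\ell$ and $P_\ell$.
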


As an immediate consequence, we have the following.

\begin{theorem}
 \label{Theorem.Generalised.Repeated.Root.Factor.Test}
 Let $ A_{\ell} = \{\vec{a} = (a_1, \dots, a_n)\in \N^n: \quad  0 \le a_{\ell} < m_{\ell} \}$\ and
 $A = \cup_{\ell = 1}^{n}A_{\ell}$. Let $\gm \in \F_q[x_1, \dots , x_n]$. We have
 $(x_1 -1)^{m_1}\cdots (x_n - 1)^{m_n}$\ divides $\gm$\ if and only if
 $D^{[\vec{a}]}(g)(1, \dots, 1) = 0$\ for all $\vec{a} \in A$.
\end{theorem}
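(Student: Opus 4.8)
The statement to prove combines, over all directions $\ell$, the single-direction divisibility criterion established in the immediately preceding Lemma. Let me plan the proof.

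Looking at this theorem, I see it's asserting that the product $(x_1-1)^{m_1}\cdots(x_n-1)^{m_n}$ divides $g$ iff all the relevant Hasse derivatives vanish at $(1,\ldots,1)$, where the relevant directions are collected in $A = \bigcup_\ell A_\ell$.

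Let me think about how the preceding Lemma connects to this.
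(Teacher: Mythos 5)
Your proposal identifies the correct strategy, and in fact the same one the paper uses: the theorem is meant to follow from the preceding lemma (the single-direction criterion that $(x_\ell-1)^{m_\ell}$ divides $g$ if and only if $D^{[\vec{a}]}(g)(1,\dots,1)=0$ for all $\vec{a}\in A_\ell$), applied for each $\ell$ and then combined over $\ell=1,\dots,n$. However, what you have written is only a plan; the argument itself is never carried out. Your text ends with ``Let me think about how the preceding Lemma connects to this,'' so as a proof it is incomplete.

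The missing content is the bridge between the two statements, and it is not purely formal. You need the equivalence
\begin{equation*}
(x_1-1)^{m_1}\cdots(x_n-1)^{m_n} \mid \gm
\quad\Longleftrightarrow\quad
(x_\ell-1)^{m_\ell} \mid \gm \ \text{ for every } 1\le \ell \le n .
\end{equation*}
The forward direction is trivial, but the converse requires an argument: it holds because $\F_q[x_1,\dots,x_n]$ is a unique factorization domain and the polynomials $x_1-1,\dots,x_n-1$ are pairwise non-associate irreducibles (they involve distinct variables), so the powers $(x_\ell-1)^{m_\ell}$ are pairwise coprime and their least common multiple is their product. Once this is in place, the theorem follows by chaining equivalences: $(x_1-1)^{m_1}\cdots(x_n-1)^{m_n}\mid \gm$ iff $(x_\ell-1)^{m_\ell}\mid \gm$ for all $\ell$, iff (by the lemma) $D^{[\vec{a}]}(g)(1,\dots,1)=0$ for all $\vec{a}\in A_\ell$ and all $\ell$, iff $D^{[\vec{a}]}(g)(1,\dots,1)=0$ for all $\vec{a}\in A=\cup_{\ell=1}^n A_\ell$. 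Without the coprimality observation and this chain written out, the proposal does not establish the statement.
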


Let $\sR$\ be as in (\ref{Definition.Ambient.Space}) and 
let $C = \id{ (x_1 - 1)^{m_1}\cdots (x_n-1)^{m_n} } \subset \sR$.
We know that $\gm \in C$\ if and only if $(x_1 - 1)^{m_1}\cdots (x_n-1)^{m_n}$\ divides
$\gm$. Note that $D^{[a_1, \dots, a_n]}(g)(1,\dots, 1) = 0$\ if $a_{\ell} \ge p^{s_{\ell}}$\ for some $1 \le \ell \le n$.
Together with this fact, Theorem \ref{Theorem.Generalised.Repeated.Root.Factor.Test} implies the following.

\begin{theorem}
 \label{Theorem.Generalised.Repeated.Root.Factor.Test.For.Codewords}
 Let  $C = \id{ (x_1 - 1)^{m_1}\cdots (x_n-1)^{m_n} } \subset \sR$.
 Define $Q_{\ell} = \{\vec{a} = (a_1, \dots, a_n)\in \N^n: \quad  0 \le a_{\ell} < m_{\ell}, \quad
   0 \le a_j < p^{s_j}\quad \mbox{for}\quad j \neq \ell \}$\ and
 $Q = \cup_{\ell = 1}^n Q_{\ell}$.
 Then $\gm \in C$\ if and only if $D^{[\vec{a}]}(g)(1,\dots, 1) = 0$\ for all $\vec{a} \in Q$.
\end{theorem}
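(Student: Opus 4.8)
The plan is to reduce everything to Theorem~\ref{Theorem.Generalised.Repeated.Root.Factor.Test}, which already characterizes divisibility by $(x_1-1)^{m_1}\cdots(x_n-1)^{m_n}$ in the \emph{polynomial} ring via Hasse derivatives, and then to cut the infinite index set $A$ down to the finite set $Q$ using the degree bound imposed by the ambient ring $\sR$. Throughout I would take $g$ to be the canonical representative of its class, i.e.\ $\gm = \sum_{\vec i} f_{\vec i}\, x_1^{i_1}\cdots x_n^{i_n}$ with $0\le i_j < p^{s_j}$ for every $j$; in particular $\deg_{x_\ell} g < p^{s_\ell}$ for each $\ell$.

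First I would record the two inputs. As noted just before the statement, $\gm\in C$ holds if and only if $(x_1-1)^{m_1}\cdots(x_n-1)^{m_n}$ divides $\gm$; for the canonical representative this is genuine divisibility in $\F_q[x_1,\dots,x_n]$, since under the substitution $y_j = x_j-1$ of Remark \ref{Remark.Isomorhpism.Basic.Change.of.Variable} the element $\gm$ becomes $\sum_{\vec i} f_{\vec i}\, y_1^{i_1}\cdots y_n^{i_n}$ and membership in $C$ is exactly the condition $f_{\vec i}=0$ whenever some $i_j<m_j$, which is precisely divisibility of this polynomial by $y_1^{m_1}\cdots y_n^{m_n}$. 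Applying Theorem \ref{Theorem.Generalised.Repeated.Root.Factor.Test} then gives that $\gm\in C$ if and only if $D^{[\vec a]}(g)(1,\dots,1)=0$ for all $\vec a\in A=\bigcup_{\ell=1}^n A_\ell$.

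The heart of the argument is to replace $A$ by $Q$. The key point is the degree bound: if $a_\ell\ge p^{s_\ell}$ for some $\ell$, then for every monomial $x_1^{i_1}\cdots x_n^{i_n}$ of the canonical $g$ we have $i_\ell<p^{s_\ell}\le a_\ell$, so $\binom{i_\ell}{a_\ell}=0$ and hence $D^{[\vec a]}(g)$ vanishes identically; in particular $D^{[\vec a]}(g)(1,\dots,1)=0$. Now for each $\ell$ I would write $A_\ell = Q_\ell\sqcup (A_\ell\setminus Q_\ell)$ and observe that every $\vec a\in A_\ell\setminus Q_\ell$ has $a_j\ge p^{s_j}$ for some $j\ne\ell$. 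By the degree bound those derivatives vanish automatically, so the vanishing conditions over $A_\ell$ and over $Q_\ell$ are equivalent. Taking unions over $\ell$, the condition ``$D^{[\vec a]}(g)(1,\dots,1)=0$ for all $\vec a\in A$'' becomes equivalent to the same condition over $Q=\bigcup_\ell Q_\ell$, and combining this with the previous paragraph yields the theorem.

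I do not expect a genuine obstacle here: the only substantive ingredient is the degree-truncation fact, and the only thing to watch is insisting that $g$ be the \emph{reduced} representative, since for a non-reduced representative polynomial divisibility by $(x_1-1)^{m_1}\cdots(x_n-1)^{m_n}$ need no longer match membership in $C$ (for instance $(x_1-1)^{p^{s_1}}$ equals $0$ in $\sR$ yet is not divisible by such a monomial in general). One could alternatively bypass this by verifying directly that $D^{[\vec a]}(\cdot)(1,\dots,1)$ is well defined on $\sR$ for $\vec a\in Q$, using the Hasse--Leibniz rule together with the fact that $D^{[\vec b]}\big((x_j-1)^{p^{s_j}}\big)(1,\dots,1)=0$ whenever $b_j<p^{s_j}$, but fixing the canonical representative is cleaner.
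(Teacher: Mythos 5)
Your proposal is correct and takes essentially the same route as the paper: the paper's own (very terse) justification is precisely the combination of the membership-equals-divisibility criterion, the truncation fact that $D^{[\vec{a}]}(g)(1,\dots,1)=0$ whenever some $a_\ell \ge p^{s_\ell}$, and Theorem~\ref{Theorem.Generalised.Repeated.Root.Factor.Test}, which is exactly your argument. Your explicit insistence on the reduced representative (modulo a small notational clash, reusing $f_{\vec{i}}$ both for the $x$-monomial coefficients and for the coefficients in the $(x_j-1)$-basis) only spells out what the paper leaves implicit.
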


Fix a monomial order, take $x_1 > \cdots > x_n$.
Let $\vec{a} = (a_1, \dots , a_n)\in Q$. Consider the vector
$$w_a = \left( {p^{s_1} -1 \choose a_1}\cdots {p^{s_n} -1 \choose a_n}, 
	  {p^{s_1} -1 \choose a_1}\cdots {p^{s_{n-1}} -1 \choose a_{n-1}}{p^{s_n} -2 \choose a_n}, \cdots
	  {0 \choose a_1}, \cdots {0 \choose a_n}\right). $$
For $\gm \in \sR$, let $u_g$\ be the vector representation of the polynomial with respect to the fixed ordering.
Then the dot product of $w_a$\ and $u_g$\ gives us the evaluation of the Hasse derivative of $\gm$\
at $(1, \dots, 1)$\ in the direction $\vec{a}$, i.e.,
$w_a \cdot u_g = D^{[\vec{a}]}(g)(1, \dots, 1)$.
Now let $H$\ be a matrix having rows $w_a$\ where $\vec{a} \in Q$\ and $Q$\ is as in 
Theorem \ref{Theorem.Generalised.Repeated.Root.Factor.Test.For.Codewords}. 
Then $H$\ is a parity check matrix for $C$\ by Theorem \ref{Theorem.Generalised.Repeated.Root.Factor.Test.For.Codewords}.

In particular, when there are two variables, we have the following construction.
Let
\be
  \sR_2 = \frac{\F_q[x,y]}{ \id{x^{p^{s_1}} - 1, y^{p^{s_2}} -1} }\nn
\ee
and let $C_2 = \id{(x-1)^i(y-1)^j} \subset \sR_2$.
Define
\be
  A^{(2)} & = & \{ (k,\ell): \quad 0 \le k < k_1 \quad \mbox{and} \quad 0 \le \ell < k_2 \}\nn\\
      && \sqcup \{ (k,\ell): \quad 0 \le k < k_1  \quad \mbox{and} \quad  k_2 \le \ell < p^{s_2} \}\nn\\
      && \sqcup \{ (k,\ell): \quad k_1 \le k < p^{s_1}  \quad \mbox{and} \quad  0 \le \ell < k_2 \}.\nn
\ee
For 
$$ 
  f(x,y) = \sum_{
    \begin{array}{c}
     0 \le i < p^{s_1}\\
     0 \le j < p^{s_2}
    \end{array}
  } f_{ij}x^iy^j, 
$$
using the lexicographic order $x > y$, we can view $f(x,y)$\ as the vector
$$f = (f_{p^{s_1}-1,p^{s_2}-1},f_{p^{s_1}-1,p^{s_2}-2}, \dots , f_{p^{s_1}-1,0}, \dots , f_{0,0}).$$
So
\be
  D^{[k,\ell]}(f)(1,1) = f \cdot \left( {p^{s_1} - 1 \choose k}{p^{s_2} -1 \choose \ell},
      {p^{s_1} - 1 \choose k}{p^{s_2} -2 \choose \ell}, \dots , {0 \choose k}{0 \choose \ell} \right).\nn
\ee
Hence an $(ip^{s_2}+jp^{s_1}-ij)\times(p^{s_1}p^{s_2})$ matrix whose rows are
$$ \left[ {p^{s_1} - 1 \choose k}{p^{s_2} -1 \choose \ell},
      {p^{s_1} - 1 \choose k}{p^{s_2} -2 \choose \ell}, \dots , {0 \choose k}{0 \choose \ell} \right] $$
where $(k,\ell) \in  A^{(2)}$, is a parity check matrix for $C_2$.

\end{document}